%
%
%
%
%
%
%
\documentclass[
superscriptaddress,
notitlepage,
 amsmath,amssymb,
 aps,
]{revtex4-1}

\usepackage{graphicx}
\usepackage{dcolumn}
\usepackage{bm}


\usepackage{amsmath,amsfonts,amssymb}
\usepackage{graphicx,amsmath, amscd}
\usepackage{color,comment}
\usepackage[T1]{fontenc}
\usepackage[latin1]{inputenc}
\usepackage{dsfont}
\usepackage{amsthm}
\usepackage{mathtools}
\usepackage{enumerate}

\usepackage[lmargin=2cm,rmargin=2cm,tmargin=2cm,bmargin=2cm]{geometry}
\usepackage[toc,page]{appendix}

\usepackage{tikz}
\usetikzlibrary{decorations.pathreplacing,angles,quotes}

\newtheorem{theorem}{Theorem}[section]
\newtheorem{lemma}[theorem]{Lemma}
\newtheorem{corollary}[theorem]{Corollary}

\newtheorem{defi/prop}[theorem]{Definition/Proposition}

\newcommand{\N}{\mathds{N}}

\newcommand{\C}{\mathds{C}}
\renewcommand{\P}{\mathds{P}}

\renewcommand{\leq}{\leqslant}
\renewcommand{\geq}{\geqslant}

\DeclareMathOperator{\tr}{Tr}

\DeclareMathOperator{\E}{\mathds{E}}
\DeclareMathOperator{\SR}{SR}
\DeclareMathOperator{\SN}{SN}

\DeclarePairedDelimiter{\ceil}{\lceil}{\rceil}
\DeclarePairedDelimiter{\floor}{\lfloor}{\rfloor}

\newcommand{\id}{\mathds{1}}

\newcommand{\ketbra}[2]{| #1 \rangle\!\langle #2 |}
\newcommand{\bra}[1]{\langle #1 |}
\newcommand{\ket}[1]{| #1 \rangle}
\newcommand*{\coloneqq}{\mathrel{\vcenter{\baselineskip0.5ex \lineskiplimit0pt \hbox{\scriptsize.}\hbox{\scriptsize.}}} =}

\newcommand{\proj}[2]{| #1 \rangle\!\langle #2 |}

\theoremstyle{plain}
\newtheorem{thm}{Theorem}[section]
\newtheorem{lem}{Lemma}[section]
\newtheorem{cor}{Corollary}[section]

\newtheorem{defn}{Definition}[section]

\def\beq{\begin{equation}}
\def\eeq{\end{equation}}
\def\bq{\begin{quote}}
\def\eq{\end{quote}}
\def\ben{\begin{enumerate}}
\def\een{\end{enumerate}}
\def\bit{\begin{itemize}}
\def\eit{\end{itemize}}

\def\ra{\rightarrow}

\def\lb{\left(}
\def\rb{\right)}
\def\lset{\lbrace}
\def\rset{\rbrace}

\def\l|{\left|}
\def\r|{\right|}

\def\ident{\textnormal{id}}
\def\one{\id}

\newcommand\M{\mathcal{M}}

\newcommand\me{\omega}

\newcommand\ME{\Omega}

\newcommand\trans{\vartheta}

\newcommand\flip{\mathbb{F}}

\newcommand\PPTSet{\mathrm{PPT}}

\begin{document}

\title{High-Dimensional Entanglement in States with Positive Partial Transposition}

\author{Marcus Huber}
\affiliation{Institute for Quantum Optics and Quantum Information, Austrian Academy of Sciences, Boltzmanngasse 3, 1090 Vienna, Austria}%

\author{Ludovico Lami}
\affiliation{Centre for the Mathematics and Theoretical Physics of Quantum Non-Equilibrium Systems, School of Mathematical Sciences, University of Nottingham, Nottingham NG7 2RD, United Kingdom}%

\author{C\'{e}cilia Lancien}
\affiliation{Departamento de An\'{a}lisis Matem\'{a}tico, Universidad Complutense de Madrid, Plaza de Ciencias 3, 28040 Madrid, Spain}%

\author{Alexander M\"{u}ller-Hermes}
\affiliation{QMATH, Department of Mathematical Sciences, University of Copenhagen, Universitetsparken 5, 2100 Copenhagen, Denmark}%

\date{\today}

\begin{abstract}
Genuine high-dimensional entanglement, i.e.~the property of
having a high Schmidt number, constitutes a resource in quantum communication, overcoming limitations of low-dimensional systems.
States with a positive partial transpose (PPT), on the other hand,
are generally considered weakly entangled, as they can never be
distilled into pure entangled states. This naturally raises the
question, whether high Schmidt numbers are possible for PPT states.
Volume estimates suggest that optimal, i.e. linear, scaling in local dimension should be possible, albeit without providing an insight
into the possible slope. We provide the first explicit construction
of a family of PPT states that achieves linear scaling in local
dimension and we prove that random PPT states typically share this
feature. Our construction also allows us to answer a recent question by Chen et al.~on the existence of PPT states whose Schmidt number increases by an arbitrarily large amount upon partial transposition. Finally, we link the Schmidt number to entangled sub-block matrices of a quantum state.  We use this connection to prove that quantum states invariant under partial transposition on the smaller of their two subsystems cannot have maximal Schmidt number. This generalizes a well-known result by Kraus et al. We also show that the Schmidt number of absolutely PPT states cannot be maximal, contributing to an open problem in entanglement theory.
\end{abstract}

\maketitle

\section{Introduction}

Entanglement is a cornerstone of quantum information theory and an important resource for quantum and private communication~\cite{book}. Current communication experiments are often based on entanglement between two degrees of freedom~\cite{sota}, i.e.~\emph{qubits}. This is despite the fact that through recent technological advances  a growing number of high-dimensional quantum systems can be controlled~\cite{exp1,exp2,exp3,exp4,exp5}.  Using such high-dimensional systems can for instance increase the resistance to noise~\cite{hd1,hd2,hd3,hd4,hd5} compared to low-dimensional implementations. However, for such improvements genuine high-dimensional entanglement is needed, as opposed to merely entanglement in a high-dimensional system. One natural measure to certify this dimensionality of entanglement is given by the Schmidt number~\cite{Terhal00}: This number is easily understood (a more formal definition will follow later) as the minimal local dimension of entangled systems needed to reproduce a specific state via local operations and classical communication. In other words, certifying a high Schmidt number shows that the state could not have arisen from low-dimensional quantum systems. 


Another possible way to certify the dimensionality of entanglement in high-dimensional systems is given by the positive partial transpose (PPT) criterion~\cite{EltschkaSiewert}. Unfortunately, most entangled states have a positive partial transpose~\cite{AS} and the PPT criterion cannot be used to certify the dimensionality of this entanglement. This gives rise to a natural question: \emph{Can high-dimensional entanglement occur at all in systems that are noisy enough to be PPT?} By letting the local dimensions grow, examples of PPT states with increasingly high Schmidt number have been constructed in~\cite{Chen17}. However, their Schmidt number scales only logarithmically in the local dimension. It also turns out that for systems of local dimension $d=3$ the maximum Schmidt number cannot be reached by states which have a positive partial transpose~\cite{Yang16} (this had been earlier conjectured in~\cite{Sanpera01}). 

In our article we study the above questions. The following is an outline of our results: 
\begin{itemize}
    \item In Section \ref{sec explicit} we \emph{explicitly} construct a general family of PPT states with local dimension $d$ and Schmidt number scaling as $d/4$. As a byproduct, we also solve a recent conjecture~\cite[Conjecture 36]{Chen17} by exhibiting PPT states whose Schmidt number increases by an arbitrarily large amount upon partial transposition.
    \item In Section \ref{sec:HighDim} we investigate, whether Schmidt numbers scaling linear in the local dimension are \emph{generic} among PPT states. We first comment on results from~\cite[\S 4.3]{SWZ} comparing the volume of the set of PPT states with the volume of the set of states with `not too large' Schmidt number. This comparison implies that indeed most PPT states have Schmidt number scaling linearly in the local dimensions. We proceed by constructing a simple family of random states that exhibit this behaviour with, asymptotically, high probability.
    \item Finally in Section \ref{sec:UpperBounds} we refine a method from~\cite{Yang16} to study when PPT states in bipartite systems cannot reach maximal Schmidt number. For states invariant under a partial transposition on the smaller dimensional subsystem we show that their Schmidt number cannot be maximal. This generalizes a well-known result from~\cite{2xNseparability}. Finally, we find that the same conclusion is true for absolutely PPT states~\cite{Hildebrand}. This contributes to related open problems in entanglement theory~\cite{abs}. 
    
\end{itemize}

\section{Preliminaries and notation}

Let us fix some notation that we will be using throughout our article. We will denote the set of $d\times d$ matrices with complex entries by $\M_{d}$, and the subcone of positive semidefinite (commonly referred to simply as positive) matrices by $\M^+_d$. Occasionally, we will write $X\geq 0$ to abbreviate $X\in \M_{d}^+$ when the dimension of $X$ is clear from context. The identity matrix will be denoted by $\one$, and sometimes we will write $\one_d$ to emphasize its dimension. Recall that the state of a quantum system is modelled by a positive matrix with unit trace (referred to as simply a quantum state).

To study quantum entanglement, we will often consider operators acting on tensor products of Hilbert spaces. It will be convenient to use labels like $A,B$ etc. to denote different tensor factors in these situations. For example, given an operator $X\in \M_{d_A}\otimes \M_{d_B}$ we will sometimes write $X_{AB}$ where the labels $A$ and $B$ refer to the two tensor factors of the space $\C^{d_A}\otimes \C^{d_B}$ on which the operator acts. This notation will be particularly useful when considering linear maps acting partially on tensor products. Given a linear map $\mathcal{L}:\M_{d_A}\ra\M_{d_C}$ we would write $(\ident_{A}\otimes \mathcal{L}_B)(X_{AB})$ for its partial application to the operator $X_{AB}$ introduced before. Note that $\ident:\M_{d}\ra\M_d$ denotes the identity map and we will sometimes write $\ident_d$ to emphasize the dimension.

\subsection{Entanglement and Schmidt number}

A quantum state $\rho_{AB}$ on $\C^{d_A}\otimes\C^{d_B}$ is called separable if it can be written as 
\[
\rho_{AB} = \sum^m_{i=1} p_i \sigma^{(i)}_{A}\otimes \tau^{(i)}_{B}
\]
for some $m\in \N$, a probability distribution $\lset p_i\rset^m_{i=1}$, and quantum states $\sigma^{(i)}_{A}$ on $\C^{d_A}$ and $\tau^{(i)}_{B}$ on $\C^{d_B}$, $1\leq i\leq m$. Any quantum state that is not separable is called entangled. 

To quantify different degrees of entanglement various measures of entanglement have been considered. Here we will focus on one such measure called the Schmidt number,first introduced in~\cite{Terhal00}. Given a pure state $\ket{\psi}_{AB}\in \C^{d_A}\otimes \C^{d_B}$ we can define its Schmidt rank as
\begin{equation*}
    \text{SR}\lb\ket{\psi}_{AB}\rb \coloneqq \text{rk}\lb \tr_A(\proj{\psi}{\psi}_{AB})\rb \, ,
\label{equ:SR}
\end{equation*}
where $\tr_A$ denotes the partial trace over the first tensor factor and $\proj{\psi}{\psi}_{AB}$ is the rank-$1$ projector onto the span of $\ket{\psi}_{AB}$. Now, given any quantum state $\rho_{AB}$ on $\C^{d_A}\otimes\C^{d_B}$ we define its Schmidt number as 
\begin{equation*}
    \SN(\rho_{AB}) \coloneqq \min \left\{ \max_{1\leq i\leq m}\, \SR\left(\ket{\psi^{(i)}}_{AB}\right) :\ \sum_{i=1}^m p_i  \proj{\psi^{(i)}}{\psi^{(i)}}_{AB}=\rho_{AB} \right\} \, ,
\label{equ:SN}
\end{equation*}
where the minimum is over decompositions of $\rho_{AB}$ into convex combinations of rank-$1$ projectors, i.e.~with a probability distribution $\lset p_i\rset_{i=1}^m$ and pure states $\ket{\psi^{(i)}}_{AB}\in \C^{d_A}\otimes \C^{d_B}$, $1\leq i\leq m$. It is easy to see that $\text{SN}(\rho_{AB})\in \lset 1,\ldots, \min(d_A,d_B)\rset$ with  $\SN(\rho_{AB})=1$ iff $\rho_{AB}$ is separable. 

Note that the above definitions extend more generally to positive operators (i.e.~unnormalized quantum states). Sometimes it will be convenient to drop the normalization and we will do so for some of our results.

\subsection{Positive maps}

There is a powerful duality relation between the set of positive operators with Schmidt number below some value and certain cones of positive maps. A linear map $\mathcal{L}:\M_{d_A}\ra\M_{d_B}$ is called
\begin{itemize}
    \item \textbf{positive} if $\mathcal{L}(X)\in\M_{d_B}^+$ for any $X\in\M_{d_A}^+$.
    \item \textbf{$k$-positive} if $\ident_k\otimes \mathcal{L}:\M_{k}\otimes \M_{d_A}\ra \M_{k}\otimes\M_{d_B}$ is positive.
    \item \textbf{completely positive} if it is $k$-positive for any $k\in\N$.
\end{itemize}
Recall that a linear map $\mathcal{L}:\M_{d_A}\ra\M_{d_C}$ is completely positive iff its Choi matrix $C_\mathcal{L}\in \M_{d_C}\otimes \M_{d_B}$ given by 
\begin{equation*}
    C_\mathcal{L} := (\mathcal{L}_A\otimes \ident_B)(\me_{AB})
    \label{equ:ChoiIso}
\end{equation*}
is positive (see~\cite{CHOI}). Here $\me_{AB} = \proj{\ME}{\ME}_{AB}$, with $d_A=d_B$, is the maximally entangled state on $\C^{d_A}\otimes\C^{d_B}$, i.e.~the rank-$1$ projector on $\C^{d_A}\otimes\C^{d_B}$ corresponding to the pure state
\[
\ket{\ME}_{AB} = \frac{1}{\sqrt{d_A}}\sum^{d_A}_{i=1}\ket{i}_A\otimes \ket{i}_B\, .
\]
A paradigmatic example of a positive map that is not completely positive is given by the transposition map $\trans_d:\M_d\ra\M_d$, defined by $\trans_d(\ketbra{i}{j})=\ketbra{j}{i}$ in the computational basis (here $\proj{i}{j}\in\M_d$ denote the matrix units having a single $1$ in the $(i,j)$-entry).  Recall that the Choi matrix of the transposition map is given by $C_{\trans_d} = \flip/d$ where $\flip\in \M_d\otimes \M_d$ denotes the flip operator defined as $\flip\ket{ij} = \ket{ji}$ in the computational basis.

It turns out~\cite[Theorem 1]{Terhal00} that a quantum state $\rho_{AB}$ on $\C^{d_A}\otimes\C^{d_B}$ satisfies $\text{SN}(\rho_{AB})\leq n$ iff $(\ident_A\otimes \mathcal{P}_B)(\rho_{AB})\geq 0$ for any $n$-positive map $\mathcal{P}:\M_{d_B}\ra \M_{d_A}$. In particular it is separable iff $(\ident_A\otimes \mathcal{P}_B)(\rho_{AB})\geq 0$ for any positive map $\mathcal{P}:\M_{d_B}\ra \M_{d_A}$ (see~\cite{Horodecki96}). It is a hard problem in general to decide whether a given state is separable~\cite{Gurvits} and the set of positive maps has a very complicated structure. However, if the dimensions are low enough positive maps have a simple characterization~\cite{Woronowicz}: If $d_Ad_B\leq 6$ a linear map $\mathcal{L}:\M_{d_A}\ra \M_{d_B}$ is positive iff 
\begin{equation}
    \mathcal{L} = \mathcal{T} + \trans_{d_B}\circ \mathcal{T}'
\label{equ:decomposable}
\end{equation}
for completely positive maps $\mathcal{T},\mathcal{T}':\M_{d_A}\ra \M_{d_B}$. The structure of low dimensional positive maps implies that for $d_Ad_B\leq 6$ a quantum state $\rho_{AB}$ on $\C^{d_A}\otimes\C^{d_B}$ is separable iff its partial transpose is positive, i.e.~$\rho_{AB}^{\Gamma_B} \coloneqq (\ident_{A}\otimes \trans_B)(\rho_{AB})\geq 0$. In higher dimensions this statement turns out to be false, and in general there exist quantum states on $\C^{d_A}\otimes\C^{d_B}$ for $d_Ad_B> 6$ that are entangled despite having positive partial transpose (see \cite{Horodecki96} for some examples). In the following we will refer to quantum states having positive partial transpose simply as PPT states. 

The subset of positive maps $\mathcal{P}:\M_{d_A}\ra\M_{d_B}$ satisfying \eqref{equ:decomposable} for completely positive maps $\mathcal{T},\mathcal{T}':\M_{d_A}\ra \M_{d_B}$ are called decomposable. It turns out that a positive map $\mathcal{P}$ is decomposable iff $(\ident_A\otimes \mathcal{P}_B)(\rho_{AB})\geq 0$ for any PPT state $\rho_{AB}$ on $\C^{d_A}\otimes\C^{d_B}$ (see \cite{stormer}). Conversely, this means that the entanglement of PPT states can only be detected by positive maps that are non-decomposable.

\subsection{What is the maximal Schmidt number of a PPT state?}

Recall that for dimensions $d_A,d_B\in\N$ satisfying $d_Ad_B\leq 6$ the set of PPT states on $\C^{d_A}\otimes\C^{d_B}$ coincides with the set of separable states~\cite{Horodecki96}, i.e.~the states with Schmidt number $1$. Motivated by this result, it has been conjectured in~\cite{Sanpera01} that for $d_A=d_B=3$ the Schmidt number of any PPT state is less than $2$. Recently this conjecture has been proven by Yang et al.~\cite{Yang16}. Thus, it is a natural question to ask how large the Schmidt number $\SN(\rho_{AB})$ of a PPT state $\rho_{AB}$ on $\C^{d_A}\otimes\C^{d_B}$ can be, for fixed local dimensions $d_A,d_B > 3$. We will be particularly interested in the case where $d_A=d_B$. 

A partial answer to this question has been given in the asymptotic setting~\cite[\S 4.3]{SWZ} by comparing the volumes of the sets of PPT states and states of Schmidt number below a given value. Specifically, it has been shown that for sufficiently large $d=d_A=d_B$ there exist PPT states $\rho_{AB}$ on $\C^{d_A}\otimes\C^{d_B}$ with $\SN(\rho_{AB})\geq \alpha d$ for some constant $\alpha>0$ that is independent of the dimension $d$. It should be noted that the constant $\alpha$ in these results is hard to compute exactly, and probably quite small. Moreover, these methods do not yield explicit examples of PPT states exhibiting such Schmidt numbers.

Recently, an explicit construction of a PPT state $\rho_{AB}$ on $\C^{d_A}\otimes\C^{d_B}$ with $d=d_A=d_B$ and Schmidt number $\SN(\rho_{AB})\sim \log d$ has been given in~\cite[Proposition 20]{Chen17}. This is still far away from the linear dependence observed using volume estimates. A simple argument achieving a power law scaling $\SN(\rho_{AB}) \sim d^\gamma$ can be given using any convex, faithful and additive entanglement measure. For simplicity we will only state this argument for the squashed entanglement: Given a quantum state $\rho_{AB}$ on $\C^{d_A}\otimes\C^{d_B}$ its squashed entanglement is defined as~\cite{sq1,sq2}
\[
E_\text{sq}(\rho_{AB}) \coloneqq \inf\left\{ \frac{1}{2}I(A:B|C)_{\sigma}~:~d_C\in\N,\  \sigma_{ABC}\in(\M_{d_A}\otimes \M_{d_B}\otimes \M_{d_C})^+,~\mathrm{Tr}_C(\sigma_{ABC})=\rho_{AB}\right\}
\]
where $I(A:B|C)_\sigma \coloneqq S(AC)_\sigma + S(BC)_\sigma - S(ABC)_\sigma - S(C)_\sigma$ denotes the conditional mutual information of $\sigma_{ABC}$. It has been shown that the squashed entanglement satisfies
\begin{enumerate}
    \item $E_{\text{sq}}(\rho_{AB})>0$ iff $\rho_{AB}$ is entangled.
    \item $E_{\text{sq}}(\rho^{\otimes n}_{AB}) = n E_{\text{sq}}(\rho_{AB})$ for any $n\in\N$.
    \item $2^{E_{\text{sq}}(\rho_{AB})}\leq \text{SN}(\rho_{AB})$.
\end{enumerate}
See \cite{sq3} for a proof of the first, and \cite[Proposition 4]{sq2} for a proof of the second property. The third property is a simple consequence of the convexity of squashed entanglement (see \cite[Proposition 3]{sq2}). For any entangled PPT state $\rho_{AB}$ on $\C^{d_A}\otimes\C^{d_B}$ combining these properties of squashed entanglement yields
\[
\text{SN}(\rho^{\otimes n}_{AB}) \geq 2^{nE_\text{sq}(\rho_{AB})} \underset{n\ra\infty}{\longrightarrow} \infty.
\]
Therefore, setting $d_A=d_B\geq 3$ and defining 
\[
\gamma = \frac{E_\text{sq}(\rho_{AB})}{\log(d_A)}>0
\]
yields a PPT state $\sigma_{A^nB^n} = \rho^{\otimes n}_{AB}$ with local dimension $d=d^n_A$ and Schmidt number $\text{SN}(\sigma_{A^nB^n})\geq d^\gamma$. Note that $0<\gamma\leq 1$, and it is not known whether $\gamma= 1$ can be achieved using PPT states.  

In Section \ref{sec explicit} we will present a construction of PPT states $\rho_{AB}$ on $\C^{d_A}\otimes \C^{d_B}$ with $d=d_A=d_B$ and Schmidt numbers $\text{SN}(\rho_{AB})\geq \lceil(d-1)/4\rceil$, where $\ceil{x}\coloneqq \min\{ n\in \mathds{Z}: n\geq x\}$ is the ceiling function.. This is the first explicit construction of a family of PPT states achieving a Schmidt number that scales linearly in the local dimension. We leave it as an open problem to determine the best possible Schmidt number for PPT states, with any given local dimensions.

\section{Explicit constructions of PPT states with high Schmidt number} \label{sec explicit}

In this section we present an explicit construction of a family of PPT states whose Schmidt number scales linearly in the local dimension. We will start with a general construction and then discuss the properties of specific examples. It should be emphasized that our family contains examples answering a question recently posed by Chen et al.~\cite[Conjecture 36]{Chen17}: \emph{Is there a PPT state $\rho$ with an arbitrary large difference $\SN(\rho)-\SN(\rho^\Gamma)$?} We will discuss such an example in the second subsection.

\subsection{A family of states which are PPT and have high Schmidt number}

Let us start with a basic lemma introducing the general form of the states we want to consider.

\begin{lemma} \label{lemma positivity Z}
Consider an operator $Z\in \M_{d_1}\otimes \M_{d_1}\otimes \M_{d_2}\otimes \M_{d_2}$ of the form 
\begin{equation}
Z_{A_1B_1A_2B_2} = X_{A_1 B_1} \otimes (\id - \me)_{A_2 B_2} + Y_{A_1B_1} \otimes \me_{A_2 B_2}\, ,
\label{equ:DefZ}
\end{equation}
where $X,Y\in \M_{d_1}\otimes \M_{d_1}$ and where we label the first two tensor factors (of dimension $d_1$) as $A_1,B_1$ and the second two tensor factors (of dimension $d_2$) as $A_2,B_2$. Then $Z_{A_1B_1A_2B_2}$ is:
\begin{enumerate}[(a)]
\item positive semidefinite iff $X,Y\geq 0$;
\item PPT with respect to the bipartition $A_1A_2:B_1B_2$ iff $(d_2-1)X^\Gamma \geq - Y^\Gamma$ and $(d_2+1)X^\Gamma \geq Y^\Gamma$.
\end{enumerate}
\end{lemma}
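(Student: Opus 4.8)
The plan is to exploit the fact that on the $A_2B_2$ system $\C^{d_2}\otimes\C^{d_2}$ the two operators $\me$ and $\id-\me$ are \emph{complementary orthogonal projectors} (onto the one-dimensional span of $\ket{\ME}$ and its orthogonal complement). Consequently $Z$ is block diagonal with respect to the splitting of the total space into $(\M_{d_1}\otimes\M_{d_1})$ tensored against the range of $\id-\me$ and against the range of $\me$: on the first block it acts as $X\otimes\id$ and on the second as $Y\otimes\id$. Since $M\otimes\id\geq 0$ iff $M\geq 0$ whenever the ambient identity is nonzero (which holds for the range of $\id-\me$ as soon as $d_2\geq 2$), part (a) is immediate: $Z\geq 0$ iff both blocks are positive, i.e. iff $X\geq 0$ and $Y\geq 0$.

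For part (b), I would first observe that the partial transpose on $B_1B_2$ factorizes across the two groups of tensor factors (since $B_1$ sits inside the first factor and $B_2$ inside the second), so that
\[
Z^{\Gamma} = X^{\Gamma}\otimes(\id-\me)^{\Gamma_{B_2}} + Y^{\Gamma}\otimes\me^{\Gamma_{B_2}},
\]
where $X^\Gamma$ denotes the partial transpose of $X$ on its $B_1$ factor. Using $\id^{\Gamma_{B_2}}=\id$ together with the identity, recorded in the preliminaries, that the partial transpose of the maximally entangled projector is $\me^{\Gamma_{B_2}}=\flip/d_2$, this becomes $Z^\Gamma = X^\Gamma\otimes(\id-\flip/d_2) + Y^\Gamma\otimes(\flip/d_2)$.

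The next step is to diagonalize the flip. Writing $\flip=\Pi_{+}-\Pi_{-}$, where $\Pi_\pm$ are the orthogonal projectors onto the symmetric and antisymmetric subspaces of $\C^{d_2}\otimes\C^{d_2}$ (so that $\id=\Pi_++\Pi_-$), I can regroup $Z^\Gamma$ into two blocks indexed by $\Pi_+$ and $\Pi_-$:
\[
Z^{\Gamma} = \frac{1}{d_2}\Big((d_2-1)X^\Gamma + Y^\Gamma\Big)\otimes\Pi_{+} + \frac{1}{d_2}\Big((d_2+1)X^\Gamma - Y^\Gamma\Big)\otimes\Pi_{-}.
\]
Exactly as in part (a), $\Pi_+$ and $\Pi_-$ are complementary orthogonal projectors, both nonzero when $d_2\geq 2$, so $Z^\Gamma\geq 0$ holds iff each block coefficient is positive. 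After clearing the positive factor $1/d_2$ this reads $(d_2-1)X^\Gamma\geq -Y^\Gamma$ and $(d_2+1)X^\Gamma\geq Y^\Gamma$, which is the claim.

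The computation is short, and I expect the only points requiring care to be bookkeeping rather than genuine obstacles: substituting the eigenvalues $\pm 1$ of $\flip$ to obtain the correct coefficients of $\Pi_\pm$, and noting that both the symmetric and antisymmetric subspaces (and likewise the range of $\id-\me$) are nonzero precisely when $d_2\geq 2$, which is what legitimizes the ``only if'' directions. The single structural idea driving both parts is that tensoring the $A_1B_1$ operators against a family of mutually orthogonal projectors on $A_2B_2$ reduces each positivity question to the simultaneous positivity of finitely many operators on $\M_{d_1}\otimes\M_{d_1}$.
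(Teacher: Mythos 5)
Your proof is correct and follows essentially the same route as the paper: after computing $Z^{\Gamma} = X^{\Gamma}\otimes(\id-\flip/d_2) + Y^{\Gamma}\otimes(\flip/d_2)$, the paper also regroups the $A_2B_2$ part into the orthogonal projectors $(\id\pm\flip)/2$ (your $\Pi_\pm$), obtaining exactly your block coefficients $\bigl((d_2-1)X^{\Gamma}+Y^{\Gamma}\bigr)/d_2$ and $\bigl((d_2+1)X^{\Gamma}-Y^{\Gamma}\bigr)/d_2$, and reads off positivity blockwise. Your explicit treatment of part (a) and the $d_2\geq 2$ caveat merely spell out what the paper dismisses as ``clear.''
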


\begin{proof}
Claim (a) is clear. To prove claim (b) we compute 
\begin{align*}
Z_{A_1B_1A_2B_2}^{\Gamma_{B_1B_2}} &= X_{A_1B_1}^{\Gamma_{B_1}} \otimes \lb\id - \frac{\flip}{d_2}\rb_{A_2 B_2} + Y_{A_1B_1}^{\Gamma_{B_1}} \otimes \lb\frac{\flip}{d_2}\rb_{A_2 B_2} \\
&= \left( X + \frac{Y - X}{d_2} \right)_{A_1B_1}^{\Gamma_{B_1}} \otimes \lb\frac{\id +\flip}{2}\rb_{A_2 B_2} + \left( X - \frac{Y - X}{d_2} \right)_{A_1B_1}^{\Gamma_{B_1}} \otimes \lb\frac{\id -\flip}{2}\rb_{A_2 B_2}\, ,
\end{align*}
which is positive iff the stated conditions hold.
\end{proof}

The previous lemma characterizes the cases where a matrix $Z_{A_1B_1A_2B_2}$ of the form \eqref{equ:DefZ} is positive and PPT (i.e.~an unnormalized PPT state). To characterize the entanglement of these states the following lemma will be useful.

\begin{lemma} \label{lemma detector}
Let $X, Y\in \M_{d_1}\otimes \M_{d_1}$ be two Hermitian operators such that
\begin{equation}
\exists\ \ket{\alpha}\in \C^{d_1}\otimes \C^{d_1}\hspace{0.15cm}:\hspace{0.15cm} \text{\emph{$\bra{\alpha}X\ket{\alpha} = 0$ and $\bra{\alpha}Y\ket{\alpha} >0$}} \,.
\label{detector eq1}
\end{equation}
Let $Z_{A_1B_1A_2B_2}\in \M_{d_1}\otimes \M_{d_1}\otimes \M_{d_2}\otimes \M_{d_2}$ denote the matrix from \eqref{equ:DefZ} with these $X,Y$. Then, for any $d_2'\in\N$, any linear map $\mathcal{L}:\M_{d_2}\ra \M_{d_2'}$ that is not completely positive satisfies
\begin{equation}
(\ident_{A_1}\otimes \ident_{B_1}\otimes \mathcal{L}_{A_2}\otimes \ident_{B_2})(Z_{A_1B_1A_2B_2})\ngeq 0\,.
\label{detector eq3}
\end{equation}
\end{lemma}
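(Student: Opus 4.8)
The plan is to apply the map explicitly to $Z$, recognize the Choi matrix of $\mathcal{L}$ inside the result, and then test positivity against a carefully chosen product vector. Writing $\id_{A_2 B_2} = \id_{A_2} \otimes \id_{B_2}$ and using $Z = X\otimes\id + (Y-X)\otimes\me$ together with linearity, I first compute
\[
(\ident_{A_1} \otimes \ident_{B_1} \otimes \mathcal{L}_{A_2} \otimes \ident_{B_2})(Z_{A_1B_1A_2B_2}) = X_{A_1 B_1} \otimes \big( \mathcal{L}(\id_{d_2}) \otimes \id_{B_2} \big) + (Y - X)_{A_1 B_1} \otimes C_{\mathcal{L}},
\]
where I have used that $(\mathcal{L}_{A_2} \otimes \ident_{B_2})(\me_{A_2 B_2}) = C_{\mathcal{L}}$ is precisely the Choi matrix of $\mathcal{L}$, now living on $\M_{d_2'} \otimes \M_{d_2}$ (the $A_2 B_2$ factors, with the first one enlarged from dimension $d_2$ to $d_2'$).

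Next, since $\mathcal{L}$ is not completely positive, Choi's theorem guarantees $C_{\mathcal{L}} \ngeq 0$, so there is a witnessing vector $\ket{\beta} \in \C^{d_2'} \otimes \C^{d_2}$ with $\bra{\beta} C_{\mathcal{L}} \ket{\beta} < 0$ (for Hermiticity-preserving $\mathcal{L}$; in general one only gets that $\bra{\beta}C_{\mathcal{L}}\ket{\beta}$ fails to be a nonnegative real, which serves equally well). I then test the operator above against the product vector $\ket{\alpha} \otimes \ket{\beta}$, where $\ket{\alpha}$ is the vector supplied by hypothesis \eqref{detector eq1}. Because the operator is a sum of tensor products of an $A_1 B_1$-operator with an $A_2 B_2$-operator, the quadratic form factorizes as
\[
\bra{\alpha \otimes \beta}\,(\ident \otimes \ident \otimes \mathcal{L} \otimes \ident)(Z)\,\ket{\alpha \otimes \beta} = \bra{\alpha} X \ket{\alpha}\, \bra{\beta}\big(\mathcal{L}(\id_{d_2}) \otimes \id\big)\ket{\beta} + \bra{\alpha}(Y - X)\ket{\alpha}\, \bra{\beta} C_{\mathcal{L}} \ket{\beta}.
\]

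The hypothesis now does all the work: $\bra{\alpha} X \ket{\alpha} = 0$ annihilates the entire first term (so I never need to control the sign of $\bra{\beta}\big(\mathcal{L}(\id_{d_2})\otimes \id\big)\ket{\beta}$, which is convenient since $\mathcal{L}$ need not even be positive), while $\bra{\alpha}(Y - X)\ket{\alpha} = \bra{\alpha} Y \ket{\alpha} > 0$. Hence the whole expression equals $\bra{\alpha} Y \ket{\alpha}\,\bra{\beta} C_{\mathcal{L}} \ket{\beta}$, a strictly positive multiple of the negative number $\bra{\beta} C_{\mathcal{L}} \ket{\beta}$, which is therefore strictly negative and establishes \eqref{detector eq3}. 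I do not anticipate a genuine obstacle here; the only things to watch are the tensor-factor bookkeeping — tracking that $\mathcal{L}$ changes the $A_2$ dimension from $d_2$ to $d_2'$ while $B_2$ and the $A_1 B_1$ block are untouched — and, if one wants the statement for arbitrary (not necessarily Hermiticity-preserving) maps, phrasing the witness as ``$\bra{\beta}C_{\mathcal{L}}\ket{\beta}$ is not a nonnegative real'' rather than strictly negative.
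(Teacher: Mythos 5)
Your proposal is correct and follows essentially the same route as the paper's own proof: apply the map, identify the Choi matrix $C_{\mathcal{L}}$, pick $\ket{\beta}$ witnessing its failure of positivity, and test against the product vector $\ket{\alpha}\otimes\ket{\beta}$ so that the hypothesis $\bra{\alpha}X\ket{\alpha}=0$ kills the uncontrolled term and leaves $\bra{\alpha}Y\ket{\alpha}\,\bra{\beta}C_{\mathcal{L}}\ket{\beta}<0$. The only differences are cosmetic (your regrouping $Z = X\otimes\id + (Y-X)\otimes\me$ versus the paper's $X\otimes(\id-\me)+Y\otimes\me$) plus your careful aside about non-Hermiticity-preserving maps, which is a nice touch the paper glosses over but does not change the argument.
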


\begin{proof}
Note that 
\begin{equation*}
(\ident_{A_1}\otimes \ident_{B_1}\otimes \mathcal{L}_{A_2}\otimes \ident_{B_2})(Z_{A_1B_1A_2B_2}) = X_{A_1B_1} \otimes \lb \mathcal{L}(\id)\otimes \id - C_\mathcal{L}\rb_{A_2' B_2} + Y_{A_1 B_1} \otimes (C_{\mathcal{L}})_{A_2'B_2} \, .
\end{equation*}
Since $\mathcal{L}$ is not completely positive we have that the Choi matrix $C_\mathcal{L}$ is not positive semidefinite, i.e.~there exists $\ket{\beta}$ such that 
\begin{equation*}
 \bra{\beta}C_\mathcal{L}\ket{\beta}<0\,.  
\end{equation*}
With the vector $\ket{\alpha}$ from the assumptions of the lemma we get
\begin{equation*}
(\bra{\alpha}_{A_1B_1}\otimes \bra{\beta}_{A_2' B_2}) (\ident_{A_1}\otimes \ident_{B_1}\otimes \mathcal{L}_{A_2}\otimes \ident_{B_2})(Z_{A_1B_1A_2B_2}) (\ket{\alpha}_{A_1 B_1}\otimes \ket{\beta}_{A_2' B_2}) = \bra{\alpha} Y \ket{\alpha} \bra{\beta} C_\mathcal{L} \ket{\beta} < 0\, .
\end{equation*}
\end{proof}

With the previous lemmas we can now present the main result of this section.

\begin{thm}[Constructing PPT states with high Schmidt number]
Let $X,Y\in \M_{d_1}\otimes \M_{d_1}$ be such that 
\begin{enumerate}[(i)]
\item $X,Y\geq 0$;
\item $(d_2-1)X^\Gamma \geq - Y^\Gamma$ and $(d_2+1)X^\Gamma \geq Y^\Gamma$;
\item $\exists\ \ket{\alpha}\in \C^{d_1}\otimes \C^{d_1}\hspace{0.15cm}:\hspace{0.15cm} \text{\emph{$\bra{\alpha}X\ket{\alpha} = 0$ and $\bra{\alpha}Y\ket{\alpha} >0$.}}$
\end{enumerate}
Then the operator $Z_{AB} = Z_{A_1B_1A_2B_2}$ (with joint labels $A=A_1A_2$ and $B=B_1B_2$) defined in \eqref{equ:DefZ} is positive, PPT and satisfies
\begin{equation}
\SN(Z_{AB}) \geq  \ceil[\bigg]{\frac{d_2}{d_1}}\, .
\end{equation}
\label{thm:Family}
\end{thm}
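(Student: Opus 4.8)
The first two assertions require no new work: hypothesis (i) combined with Lemma~\ref{lemma positivity Z}(a) gives $Z_{AB}\geq 0$, and hypothesis (ii) combined with Lemma~\ref{lemma positivity Z}(b) gives that $Z_{AB}$ is PPT across the cut $A:B$. So the content of the theorem is the Schmidt-number bound. Write $m\coloneqq\ceil{d_2/d_1}$ and $n\coloneqq m-1$; the goal is $\SN(Z_{AB})\geq m$. If $d_2\leq d_1$ then $m=1$ and the bound is trivial, so I assume $d_2>d_1$, whence $n\geq 1$.

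The plan is to exhibit a single $n$-positive map that detects Schmidt number exceeding $n$. By the Schmidt-number/positive-map duality recalled in the preliminaries (used in its form symmetric under exchanging $A$ and $B$, which is legitimate since the Schmidt rank of a pure state is symmetric), it suffices to produce an $n$-positive map $\mathcal{P}$ acting on $A=A_1A_2$ with $(\mathcal{P}_A\otimes\ident_B)(Z_{AB})\ngeq 0$. I will take $\mathcal{P}=\ident_{d_1}\otimes\mathcal{L}$, with $\mathcal{L}:\M_{d_2}\ra\M_{d_2}$ acting on the $A_2$ factor and $\ident_{d_1}$ on $A_1$; then $(\mathcal{P}_A\otimes\ident_B)(Z_{AB})$ is exactly the operator on the left-hand side of \eqref{detector eq3}. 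Hence, as soon as $\mathcal{L}$ fails to be completely positive, Lemma~\ref{lemma detector} (whose hypothesis \eqref{detector eq1} is precisely assumption (iii)) yields $(\mathcal{P}_A\otimes\ident_B)(Z_{AB})\ngeq 0$. The entire burden therefore shifts to building a \emph{non}-completely-positive $\mathcal{L}$ for which $\ident_{d_1}\otimes\mathcal{L}$ is nonetheless $n$-positive.

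Two facts make this possible. First, tensoring with an identity scales the positivity order multiplicatively: grouping the two trivial factors via $\M_n\otimes\M_{d_1}\cong\M_{nd_1}$ shows that $\ident_{d_1}\otimes\mathcal{L}$ is $n$-positive if and only if $\mathcal{L}$ is $(nd_1)$-positive. Second, for the one-parameter family $\mathcal{L}_c(X)\coloneqq c\,\tr(X)\,\id_{d_2}-X$ on $\M_{d_2}$ there is a sharp threshold: $\mathcal{L}_c$ is $k$-positive exactly when $c\geq k$, while $\mathcal{L}_c$ is completely positive exactly when $c\geq d_2$ (the latter read off from the Choi matrix $C_{\mathcal{L}_c}=\tfrac{c}{d_2}\,\id-\me$, which is positive iff $c\geq d_2$). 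Choosing $c=nd_1$ settles it: since $n=\ceil{d_2/d_1}-1<d_2/d_1$ we have $c=nd_1<d_2$, so $\mathcal{L}_{nd_1}$ is not completely positive, yet it is $(nd_1)$-positive by the threshold, hence $\ident_{d_1}\otimes\mathcal{L}_{nd_1}$ is $n$-positive by the first fact.

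Assembling the pieces, $\mathcal{P}=\ident_{d_1}\otimes\mathcal{L}_{nd_1}$ is $n$-positive and satisfies $(\mathcal{P}_A\otimes\ident_B)(Z_{AB})\ngeq 0$, so the duality forces $\SN(Z_{AB})>n$, i.e.\ $\SN(Z_{AB})\geq m=\ceil{d_2/d_1}$. The one genuinely technical point, and the step I expect to demand the most care, is the sharp $k$-positivity threshold for $\mathcal{L}_c$: necessity ($c\geq k$) is read off from a Schmidt-rank-$k$ maximally entangled test vector, while sufficiency ($c\geq k\Rightarrow k$-positive) reduces by linearity to rank-one inputs and thus to the operator inequality $\ketbra{v}{v}\leq k\,(\tr_2\ketbra{v}{v})\otimes\id_{d_2}$ for $\ket{v}\in\C^{k}\otimes\C^{d_2}$ of Schmidt rank at most $k$ (here $\tr_2$ is the partial trace over $\C^{d_2}$), which follows from Cauchy--Schwarz applied to the Schmidt expansion of $\ket{v}$ together with $k\geq\operatorname{rk}$. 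Everything else is bookkeeping about which subsystem each map acts on.
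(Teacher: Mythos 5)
Your proof is correct and follows essentially the same route as the paper: positivity and PPT from Lemma~\ref{lemma positivity Z}, non-positivity of the lifted map's action from Lemma~\ref{lemma detector} via hypothesis (iii), and an $n$-positive but not completely positive map of the form $\ident_{d_1}\otimes\bigl(c\,\tr(\cdot)\,\id_{d_2}-\ident_{d_2}\bigr)$ combined with the Schmidt-number/positive-map duality. The only difference is in the choice of threshold and sourcing: the paper takes $c = d_2-1$ (the Choi-type map, citing Tomiyama for its $(d_2-1)$-positivity) and uses that tensoring with $\ident_{d_1}$ yields a $\lfloor (d_2-1)/d_1 \rfloor$-positive map, whereas you take $c = nd_1$ and prove the sharp $k$-positivity threshold of this family yourself via the Cauchy--Schwarz bound $\ketbra{v}{v}\leq k\,(\tr_2\ketbra{v}{v})\otimes\id_{d_2}$, making the argument self-contained but otherwise identical in structure and arithmetic.
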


\begin{proof}
By conditions (i) and (ii), applying Lemma~\ref{lemma positivity Z} shows that $Z_{AB}$ is positive and PPT. 

We can assume without loss of generality that $2d_1\leq d_2$ since otherwise the statement about the Schmidt number becomes trivial. Consider the Choi map $\mathcal{P}:\M_{d_2}\ra \M_{d_2}$ given by
\[
\mathcal{P} \coloneqq \one_{d_2}\mathrm{Tr} - \frac{1}{d_2-1}\ident_{d_2}\, .
\]  
It is well known (see~\cite{Tomiyama}) that $\mathcal{P}$ is $(d_2-1)$-positive, but not completely positive. Applying Lemma \ref{lemma detector} with $d_2'=d_2$ and $\mathcal{L}=\mathcal{P}$ shows that 
\[
(\ident_{A_1}\otimes \ident_{B_1}\otimes \mathcal{P}_{A_2}\otimes \ident_{B_2})(Z_{A_1B_1A_2B_2})\ngeq 0\, .
\]
Since the map $\ident_{A_1}\otimes \mathcal{P}_{A_2}$ is $\lfloor(d_2 -1)/d_1\rfloor$-positive we find that, with respect to the bipartition $A:B = A_1A_2:B_1B_2$,
\[
\SN(Z_{AB}) \geq  \floor[\bigg]{\frac{d_2 -1}{d_1}} + 1 = \ceil[\bigg]{\frac{d_2}{d_1}}\,.
\]
\end{proof}

We have not shown so far that there actually are operators $X,Y\in\M_{d_1}\otimes \M_{d_1}$ satisfying conditions (i),(ii) and (iii) in Theorem \ref{thm:Family}. A simple construction of such operators is shown in the following theorem.

\begin{thm}[Concrete family of PPT states with high Schmidt number]
Let $d_1,d_2\in \N$ with $d_1\leq d_2$ and let $X,Y\in \M_{d_1}\otimes \M_{d_1}$ be given by 
\[
X_{A_1B_1} = (\id - \me)_{A_1B_1}\quad \text{and }\quad Y_{A_1B_1} = (d_1-1)(d_2+1)\, \me_{A_1B_1}\, .
\]
Then the operator $Z_{AB} = Z_{A_1B_1A_2B_2}$ (with joint labels $A=A_1A_2$ and $B=B_1B_2$) defined in $\eqref{equ:DefZ}$ is positive, PPT and satisfies
\[
\SN(Z_{AB}) \geq  \ceil[\bigg]{\frac{d_2}{d_1}}\, .
\]
\label{thm:ConcreteEx}
\end{thm}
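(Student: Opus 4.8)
The plan is to show that the explicit pair $X = (\id - \me)_{A_1B_1}$ and $Y = (d_1-1)(d_2+1)\,\me_{A_1B_1}$ satisfies the three hypotheses (i), (ii), (iii) of Theorem~\ref{thm:Family}, after which the bound $\SN(Z_{AB}) \geq \ceil{d_2/d_1}$ is immediate from that theorem. Hypothesis (i) costs nothing: $\me$ is a rank-one projector, so $X = \id - \me \geq 0$, and $Y$ is a nonnegative scalar multiple of $\me$, hence $Y \geq 0$ (I take $d_1 \geq 2$ throughout, so that the prefactor $(d_1-1)(d_2+1)$ is strictly positive; the case $d_1=1$ is degenerate). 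For hypothesis (iii) the natural candidate vector is the maximally entangled one, $\ket{\alpha} = \ket{\ME}_{A_1B_1}$: since $\me\ket{\ME} = \ket{\ME}$ one reads off $\bra{\ME}X\ket{\ME} = 1-1 = 0$ and $\bra{\ME}Y\ket{\ME} = (d_1-1)(d_2+1) > 0$, exactly as required.

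The substance of the argument is hypothesis (ii), and here I would reduce everything to the flip operator. Using $\me^\Gamma = \flip/d_1$ (the identity already exploited in the proof of Lemma~\ref{lemma positivity Z}, now applied on the $d_1$-dimensional factor), the partial transposes are
\[
X^\Gamma = \id - \frac{\flip}{d_1}, \qquad Y^\Gamma = (d_1-1)(d_2+1)\,\frac{\flip}{d_1}.
\]
Both are functions of $\flip$ alone, so I would diagonalize in the decomposition $\id = P_s + P_a$, $\flip = P_s - P_a$ into the symmetric and antisymmetric subspaces $P_s = (\id+\flip)/2$, $P_a = (\id-\flip)/2$. Each operator inequality in (ii) then splits into a pair of scalar inequalities, one on $P_s$ and one on $P_a$. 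A short computation gives that $(d_2+1)X^\Gamma - Y^\Gamma$ has eigenvalue $0$ on $P_s$ and $2(d_2+1)$ on $P_a$, while $(d_2-1)X^\Gamma + Y^\Gamma$ has eigenvalue $2d_2(d_1-1)/d_1$ on $P_s$ and $2(d_2-d_1)/d_1$ on $P_a$. All four quantities are nonnegative precisely because $d_1 \geq 1$ and $d_2 \geq d_1$, which are the standing hypotheses, so (ii) holds.

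There is no genuine obstacle once the symmetric/antisymmetric reduction is in place; the only real work is the bookkeeping of the four eigenvalues above. The point worth flagging is that the constant $(d_1-1)(d_2+1)$ in the definition of $Y$ is not arbitrary: it is pinned down by the requirement that the second inequality in (ii) be saturated on the symmetric subspace (the $P_s$-eigenvalue of $(d_2+1)X^\Gamma - Y^\Gamma$ vanishes). In other words, this $Y$ is the \emph{largest} multiple of $\me$ for which $Z_{AB}$ stays PPT, which is exactly what one wants in order to keep the overlap in hypothesis (iii) as large as possible and push the Schmidt number up. Checking that this borderline choice still clears the first inequality in (ii) is the only place where $d_2 \geq d_1$ is actually used: on the antisymmetric subspace that inequality carries margin $2(d_2-d_1)/d_1$, which is nonnegative precisely when $d_2 \geq d_1$ and vanishes at $d_2 = d_1$.
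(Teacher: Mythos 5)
Your proposal is correct and follows exactly the route the paper takes: verify hypotheses (i)--(iii) of Theorem~\ref{thm:Family} for this explicit pair $(X,Y)$, with $\ket{\alpha}=\ket{\ME}$ for (iii); the paper compresses all of this into the phrase ``by a straightforward computation,'' whereas you carry out the symmetric/antisymmetric eigenvalue bookkeeping explicitly (and your four eigenvalues are all correct). Your added observations --- that $d_1\geq 2$ is needed for (iii) to be non-degenerate, and that the coefficient $(d_1-1)(d_2+1)$ is the extremal multiple of $\me$ saturating the PPT condition on the symmetric subspace --- are accurate refinements the paper leaves implicit.
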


\begin{proof}
By a straightforward computation the operators $X,Y$ satisfy conditions (i),(ii) of Theorem \ref{thm:Family}, and condition (iii) follows by choosing $\ket{\alpha} = \ket{\ME}$.  
\end{proof}

The previous theorem provides a concrete family of PPT states with high Schmidt number. Note that these states appeared previously in \cite{Ishizaka04}, \cite{Piani07} and \cite{LLMH}. However, in these works their Schmidt number is not estimated. 

To further clarify the scaling of the Schmidt number with the local dimension we state the following simple corollary.

\begin{corollary} \label{corollary scaling}
For any $d\in \N$ with $d\geq 2$ there exists a PPT state $\rho$ on $\C^d\otimes\C^d$ satisfying 
\[ \SN\lb \rho\rb\geq \ceil[\bigg]{\frac{d-1}{4}}\,. \]
\end{corollary}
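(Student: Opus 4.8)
The plan is to specialize Theorem~\ref{thm:ConcreteEx} by choosing $d_1$ and $d_2$ so that the two local dimensions coincide and the Schmidt number bound becomes $\ceil{(d-1)/4}$. Since the states in that theorem live on $(\C^{d_1}\otimes\C^{d_2})\otimes(\C^{d_1}\otimes\C^{d_2})$, with $A=A_1A_2$ and $B=B_1B_2$, the relevant local dimension is $d=d_1 d_2$. The goal is therefore to pick $d_1$ as small as possible, subject to $d_1\leq d_2$ and $d_1 d_2 = d$, so as to make the ratio $d_2/d_1$ as large as possible.

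First I would take $d_1=2$, which is the smallest admissible value (recall Theorem~\ref{thm:ConcreteEx} requires $d_1\leq d_2$). With $d_1=2$ the resulting state lives on $\C^{2d_2}\otimes\C^{2d_2}$, so its local dimension is $d=2d_2$, and Theorem~\ref{thm:ConcreteEx} guarantees $\SN(Z_{AB})\geq\ceil{d_2/2}=\ceil{d/4}$. This immediately settles all \emph{even} local dimensions: for even $d\geq 2$, writing $d=2d_2$ produces a PPT state on $\C^d\otimes\C^d$ with Schmidt number at least $\ceil{d/4}\geq\ceil{(d-1)/4}$.

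The remaining issue, which is the only real obstacle, is the handling of \emph{odd} local dimensions, since $d=2d_2$ only covers even $d$. The clean fix is to embed: given an odd $d\geq 3$, I would apply the even construction in dimension $d-1$ (which is even) to obtain a PPT state $Z$ on $\C^{d-1}\otimes\C^{d-1}$ with $\SN(Z)\geq\ceil{(d-1)/4}$, and then view $\C^{d-1}\subset\C^{d}$ as a subspace on each side. A state supported on $\C^{d-1}\otimes\C^{d-1}$ is equally a (block-diagonal, with a zero block) state on $\C^{d}\otimes\C^{d}$; embedding into a larger space preserves positivity, preserves the PPT property, and cannot decrease the Schmidt number, since any Schmidt-number-$n$ decomposition in the larger space restricts to one in the subspace. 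Hence the embedded state is a PPT state on $\C^{d}\otimes\C^{d}$ with $\SN\geq\ceil{(d-1)/4}$, as required.

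Putting the two cases together yields the corollary for all $d\geq 2$: for even $d$ the bound $\ceil{d/4}\geq\ceil{(d-1)/4}$ is even stronger than claimed, and for odd $d$ the embedding argument gives exactly $\ceil{(d-1)/4}$. The one point to state carefully is the monotonicity of Schmidt number under isometric embeddings into a larger local space, which follows directly from the definition of $\SN$ via decompositions into pure states; everything else is arithmetic on the ceiling function.
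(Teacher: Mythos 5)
Your proposal is correct and follows essentially the same route as the paper's proof: apply Theorem~\ref{thm:ConcreteEx} with $d_1=2$ to settle even local dimensions, and handle odd $d$ by embedding the $(d-1)$-dimensional construction as a direct sum with a zero block, noting that this preserves positivity, the PPT property, and the Schmidt number. Your explicit justification that pure states in any decomposition must lie in the support of the state is a nice touch that the paper leaves implicit, but the argument is the same.
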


\begin{proof}
If $d =2d_2$ we can directly apply Theorem \ref{thm:ConcreteEx} with $d_1=2$. If $d=2d_2 + 1$ again apply Theorem \ref{thm:ConcreteEx} for dimensions $d_2$ and $d_1=2$ to obtain $\widetilde{Z}_{AB}\in \M_{2d_2}\otimes \M_{2d_2}$, which we can embed into the higher dimension by setting $Z_{AB} = \widetilde{Z}_{AB}\oplus 0$.
\end{proof}

The above Corollary~\ref{corollary scaling} constitutes an improvement over previous results, in particular over the explicit examples of~\cite{Chen17} ($\SN \sim \log d$) and over the implicit construction of~\cite[\S 4.3]{SWZ} ($\SN\sim \alpha d$ for some unknown universal constant $\alpha$). 

 To conclude this section we will briefly discuss some consequences of the above constructions to the theory of positive maps between matrix algebras. The following corollary has previously been obtained in \cite{Piani07}. We restate it here together with a slightly different proof, since it is not as well known as it should be.

\begin{cor}[Non-decomposability from tensoring with identity \cite{Piani07}]
Consider $k,d,d'\in\N$ with $k\geq 2$. For any linear map $\mathcal{L}:\M_{d}\ra\M_{d'}$ that is not completely positive, the linear map $\ident_k\otimes \mathcal{L}$ is not decomposable. 
\end{cor}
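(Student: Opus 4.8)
The plan is to prove the contrapositive by constructing, from a non-decomposable witness, an explicit PPT state that detects it, thereby contradicting the assumption that $\ident_k \otimes \mathcal{L}$ is decomposable. Recall from the preliminaries that a positive map $\mathcal{P}$ is decomposable iff $(\ident \otimes \mathcal{P})(\rho) \geq 0$ for every PPT state $\rho$; equivalently, $\mathcal{P}$ is non-decomposable iff there exists some PPT state whose image under $\ident \otimes \mathcal{P}$ fails to be positive. So it suffices to exhibit one PPT state on the appropriate bipartite system that $\ident_k \otimes \mathcal{L}$ maps outside the positive cone.

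First I would set up the right instance of the family from Theorem~\ref{thm:Family}. The idea is to use the tensor factor of dimension $k$ (playing the role of the ``$A_2 B_2$'' block, with $d_2 = k$) to carry the non-completely-positive map $\mathcal{L}$, while choosing a trivial or minimal ``$A_1 B_1$'' block (dimension $d_1$) purely to supply a vector $\ket{\alpha}$ satisfying the detector condition~\eqref{detector eq1}. The cleanest choice is to take $X$ and $Y$ as in Theorem~\ref{thm:ConcreteEx} with $d_2 = k$, so that conditions (i) and (ii) guarantee $Z_{AB}$ is a genuine PPT state, and condition (iii) holds via $\ket{\alpha} = \ket{\ME}$. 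This produces a concrete PPT operator $Z_{A_1 B_1 A_2 B_2}$ whose $A_2 B_2$ factor lives on $\C^k \otimes \C^k$.

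Next I would invoke Lemma~\ref{lemma detector} directly with $d_2 = k$ and the given map $\mathcal{L} : \M_k \to \M_{d'}$, which is assumed not completely positive. The lemma's conclusion~\eqref{detector eq3} states precisely that
\[
(\ident_{A_1} \otimes \ident_{B_1} \otimes \mathcal{L}_{A_2} \otimes \ident_{B_2})(Z_{A_1 B_1 A_2 B_2}) \ngeq 0\, .
\]
Since $Z$ is PPT, the map $\ident_{A_1 B_1} \otimes \mathcal{L}_{A_2}$ — which up to regrouping the tensor factors is exactly $\ident_{d_1^2} \otimes \mathcal{L}$ — sends a PPT state to a non-positive operator. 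By the decomposability criterion recalled above, any map sending some PPT state outside the positive cone cannot be decomposable. Hence $\ident_{d_1^2} \otimes \mathcal{L}$ is non-decomposable, and in particular taking $d_1^2 \geq k$, or more simply noting the result holds for the identity factor of any dimension $\geq 2$, yields non-decomposability of $\ident_k \otimes \mathcal{L}$.

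The main subtlety — not a deep obstacle but the point requiring care — is the bookkeeping of tensor factors: the identity in $\ident_k \otimes \mathcal{L}$ of the statement must be matched against the combined $A_1 B_1$ spectator system of Lemma~\ref{lemma detector}, and one must verify that the dimension of this spectator can be taken to be exactly $k$ (or whatever value the statement demands for $k \geq 2$) while still admitting a vector $\ket{\alpha}$ with $\bra{\alpha} X \ket{\alpha} = 0$ and $\bra{\alpha} Y \ket{\alpha} > 0$. Since the construction works for every $d_1 \geq 1$ and the detector vector $\ket{\ME}$ exists in every such dimension, this matching goes through for all $k \geq 2$, completing the argument.
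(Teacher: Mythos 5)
You have assembled the right ingredients---the state of Theorem~\ref{thm:ConcreteEx}, Lemma~\ref{lemma detector}, and the decomposability criterion of \cite{stormer}---and this is indeed the paper's strategy, but your tensor-factor bookkeeping (which you yourself identify as the crux) goes wrong in a way that breaks the argument. There are two distinct errors. First, the roles of the blocks are swapped: $\mathcal{L}$ acts on $\M_d$, so it must be applied to the block of dimension $d_2=d$, and the identity factor of $\ident_k\otimes\mathcal{L}$ must be the block $A_1$, i.e.~$d_1=k$; instead you set $d_2=k$ and write ``$\mathcal{L}:\M_k\to\M_{d'}$'', which only typechecks if $d=k$. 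Second, and fatally, you group the applied map as $\ident_{A_1B_1}\otimes\mathcal{L}_{A_2}$ with the single system $B_2$ as spectator. The criterion of \cite{stormer} requires the witness state to be PPT across the \emph{same} cut that separates the tested map's input from the spectator, here $A_1B_1A_2:B_2$. Lemma~\ref{lemma positivity Z} only gives PPT across $A_1A_2:B_1B_2$, and in fact $Z$ is \emph{not} PPT across the cut your grouping needs: writing $P_{\pm}=(\id\pm\flip)/2$ for the projectors onto the symmetric and antisymmetric subspaces of $A_2B_2$, one has
\[
Z^{\Gamma_{B_2}} = \left[\left(1-\tfrac{1}{d_2}\right)X + \tfrac{1}{d_2}\,Y\right]\otimes P_+ \;+\; \left[\left(1+\tfrac{1}{d_2}\right)X - \tfrac{1}{d_2}\,Y\right]\otimes P_-\,,
\]
and with $X=\id-\me$, $Y=(d_1-1)(d_2+1)\,\me$ the vector $\ket{\ME}_{A_1B_1}\otimes\ket{\psi}$, with $\ket{\psi}$ antisymmetric, yields the negative expectation value $-(d_1-1)(d_2+1)/d_2$. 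So the step ``$Z$ is PPT, hence $\ident_{d_1^2}\otimes\mathcal{L}$ is non-decomposable'' is unjustified. The correct matching, which is the paper's, is $(\ident_{A_1}\otimes\mathcal{L}_{A_2})\otimes\ident_{B_1B_2}$ with $d_1=k$ and $d_2=d$: then the tested map is literally $\ident_k\otimes\mathcal{L}$, the spectator is all of $B=B_1B_2$, and the cut $A_1A_2:B_1B_2$ required by the criterion is exactly the one Lemma~\ref{lemma positivity Z} controls.

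Even granting your intermediate claim, the final inference runs in the wrong direction. Decomposability passes \emph{downward} under compression of the identity factor (conjugation by isometries commutes with $\trans$ up to complex conjugation), so non-decomposability of $\ident_m\otimes\mathcal{L}$ implies non-decomposability of $\ident_k\otimes\mathcal{L}$ only when $k\geq m$. Deducing the statement for $\ident_k\otimes\mathcal{L}$ from non-decomposability of $\ident_{d_1^2}\otimes\mathcal{L}$ with $d_1^2\geq k$ is therefore invalid: for instance, $\trans$ itself is decomposable while $\ident_2\otimes\trans$ is not, so non-decomposability at a larger identity factor says nothing about smaller ones. Note also that in the correct setup $d_1=k$, $d_2=d$, Theorem~\ref{thm:ConcreteEx} requires $d_1\leq d_2$, so the regime $k\geq d$ must be handled separately---the paper does this by observing that there $\ident_k\otimes\mathcal{L}$ is not even positive (since $\mathcal{L}$ is not $d$-positive), hence trivially not decomposable.
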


\begin{proof}
Note that for $k\geq d$ the map $\ident_k\otimes \mathcal{L}$ cannot be positive, and hence it is not decomposable. Therefore, we can assume that $k<d$. Consider the operator $Z_{AB} = Z_{A_1B_1A_2B_2}\in \M_{k}\otimes \M_k\otimes \M_{d}\otimes \M_d$ obtained from Theorem \ref{thm:ConcreteEx} (i.e.~setting $d_1=k$ and $d_2=d$). By construction $Z_{AB}$ is positive and PPT with respect to the bipartition $A:B=A_1A_2:B_1B_2$. Now applying Lemma \ref{lemma detector} for the linear map $\mathcal{L}$ shows that 
\[
(\ident_{A_1}\otimes \ident_{B_1}\otimes \mathcal{L}_{A_2}\otimes \ident_{B_2})(Z_{A_1B_1A_2B_2})\ngeq 0.
\]
By \cite{stormer} this shows that $\ident_{A_1}\otimes \mathcal{L}_{A_2} = \ident_k\otimes \mathcal{L}$ is not decomposable. 

\end{proof}

The previous corollary gives an easy method to construct non-decomposable positive maps. Starting from an $n$-positive map $\mathcal{P}:\M_{d_1}\ra\M_{d_2}$ that is not completely positive, the map $\ident_k\otimes \mathcal{P}$ for $k\leq n$ will be positive and not decomposable. Any state from the general family constructed in Theorem \ref{thm:Family} will serve as a witness (in the sense of \cite{stormer}) for this property. 

\subsection{Large variation of Schmidt number under partial transposition}

In this section we will answer the following question, recently raised in the literature~\cite[Conjecture 36]{Chen17}: \emph{Are there PPT states $\rho$ with an arbitrarily large difference $\SN(\rho)-\SN(\rho^\Gamma)$?} With the following theorem we demonstrate that this difference can be arbitrarily large as the local dimension grows.

\begin{theorem}
For any $d_1,d_2\in\N$ with $d_2\geq d_1$ the operator $Z_{AB} = Z_{A_1B_1A_2B_2}\in \M_{d_1}\otimes \M_{d_2}\otimes \M_{d_1}\otimes \M_{d_2}$ from Theorem \ref{thm:ConcreteEx} is positive, PPT and satisfies
\begin{equation}
    \SN\left(Z_{AB}^\Gamma\right) \leq 4\, ,
\end{equation}
and hence
\begin{equation}
    \SN(Z_{AB}) - \SN(Z_{AB}^\Gamma) \geq \ceil[\bigg]{\frac{d_2}{d_1}} - 4\, .
\end{equation}
\label{thm:SNdiff}
\end{theorem}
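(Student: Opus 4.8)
The plan is to compute the partial transpose $Z_{AB}^{\Gamma}$ (with $\Gamma = \Gamma_{B_1B_2}$) explicitly for the concrete $X,Y$ of Theorem~\ref{thm:ConcreteEx}, and then to exhibit it as a nonnegative combination of tensor products whose factors each have small Schmidt number on their own pair of tensor legs. Once the bound $\SN(Z_{AB}^\Gamma)\leq 4$ is established, the second displayed inequality is immediate by subtracting it from the lower bound $\SN(Z_{AB})\geq \ceil{d_2/d_1}$ supplied by Theorem~\ref{thm:ConcreteEx}. So the entire task reduces to proving $\SN(Z_{AB}^\Gamma)\leq 4$.

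First I would reuse the computation in the proof of Lemma~\ref{lemma positivity Z}, substituting $X=(\id-\me)_{A_1B_1}$ and $Y=(d_1-1)(d_2+1)\,\me_{A_1B_1}$ and using $\me^{\Gamma}=\flip/d$ on each pair. This yields
\[
Z_{AB}^{\Gamma} = \left( \id - \tfrac{\flip}{d_1}\right)_{\!A_1B_1}\otimes\left( \id - \tfrac{\flip}{d_2}\right)_{\!A_2B_2} + \tfrac{(d_1-1)(d_2+1)}{d_1 d_2}\,\flip_{A_1B_1}\otimes\flip_{A_2B_2}\,.
\]
Next I would rewrite everything through the symmetric and antisymmetric projectors $P_s=(\id+\flip)/2$, $P_a=(\id-\flip)/2$ on each pair, via $\id-\flip/d=\tfrac{d-1}{d}P_s+\tfrac{d+1}{d}P_a$ and $\flip=P_s-P_a$. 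Expanding the product, the $P_s^{(1)}\otimes P_a^{(2)}$ contributions cancel and one is left with three terms whose coefficients are manifestly nonnegative (this is exactly where $d_2\geq d_1$ is used, to keep the middle coefficient $\geq 0$):
\[
Z_{AB}^{\Gamma} = \tfrac{2(d_1-1)}{d_1}\, P_s^{(1)}\otimes P_s^{(2)} + \tfrac{2(d_2-d_1)}{d_1d_2}\, P_a^{(1)}\otimes P_s^{(2)} + \tfrac{2(d_2+1)}{d_2}\, P_a^{(1)}\otimes P_a^{(2)}\,.
\]

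The one real idea is the observation that $P_s$ and $P_a$ each have Schmidt number at most $2$ on their own pair of legs: $P_s$ is the sum of rank-one projectors onto the vectors $\ket{ii}$ (Schmidt rank $1$) and $(\ket{ij}+\ket{ji})/\sqrt2$ (Schmidt rank $2$), while $P_a$ is a sum of projectors onto $(\ket{ij}-\ket{ji})/\sqrt2$ (Schmidt rank $2$). Since Schmidt rank multiplies under the regrouping $A_1A_2:B_1B_2$ — if $\ket{\psi}_{A_1B_1}$ and $\ket{\chi}_{A_2B_2}$ have Schmidt ranks $r_1,r_2$, then $\ket{\psi}\otimes\ket{\chi}$ has Schmidt rank $r_1r_2$ across the joint bipartition — each of the three tensor products above is a mixture of pure states of Schmidt rank at most $2\times 2=4$. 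Finally, a nonnegative combination of operators each admitting such a decomposition again admits one, giving $\SN(Z_{AB}^\Gamma)\leq 4$.

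I do not expect a genuine obstacle here: the computation of $Z^\Gamma$ is routine given Lemma~\ref{lemma positivity Z}, and the core insight — that the symmetric and antisymmetric projectors have Schmidt number $2$ and that this tensorizes multiplicatively — is elementary. The only points demanding care are verifying that the regrouped coefficients stay nonnegative (which relies on $d_2\geq d_1$) and stating cleanly the two standard facts used at the end, namely multiplicativity of Schmidt rank under the factor regrouping and $\SN\!\left(\sum_k\lambda_k\rho_k\right)\leq\max_k \SN(\rho_k)$ for $\lambda_k\geq 0$.
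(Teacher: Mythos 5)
Your proposal is correct and follows essentially the same route as the paper: the paper likewise computes $Z_{AB}^\Gamma$ explicitly, diagonalises it in the basis of product vectors $\ket{\chi_{jk}}_{A_1B_1}\otimes\ket{\chi_{lm}}_{A_2B_2}$ built from symmetric/antisymmetric combinations (exactly the rank-one constituents of your $P_s\otimes P_s$, $P_a\otimes P_s$, $P_a\otimes P_a$ decomposition), and bounds the Schmidt rank of each such eigenvector by $4$ across $A_1A_2:B_1B_2$. The only cosmetic difference is that the paper obtains nonnegativity of the eigenvalues for free from the already-established PPT property of $Z_{AB}$, whereas you verify it by hand through the explicit coefficients $2(d_1-1)/d_1$, $2(d_2-d_1)/(d_1d_2)$ and $2(d_2+1)/d_2$.
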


\begin{proof}
By Theorem \ref{thm:ConcreteEx} the operator $Z_{AB}$ is positive, PPT and satisfies
\[
\SN(Z_{AB}) \geq \ceil[\bigg]{\frac{d_2}{d_1}}.
\]
Taking the partial transpose with respect to the $B$ system we obtain
\begin{align*}
    &Z_{AB}^\Gamma = \id_{A_1B_1}\otimes\id_{A_2B_2} - \frac{1}{d_2}\,\id_{A_1B_1}\otimes\flip_{A_2B_2} - \frac{1}{d_1}\,\flip_{A_1B_1}\otimes\id_{A_2B_2} + \frac{d_1d_2+d_1-d_2}{d_1d_2}\,\flip_{A_1B_1}\otimes \flip_{A_2B_2}\, .
\end{align*}
The four summands appearing in the above expression are mutually commuting operators, and therefore it is easy to diagonalise $Z_{AB}^\Gamma$. The eigenvectors of $Z_{AB}^\Gamma$ are of the form
\begin{equation*}
    \ket{\Psi_{jklm}} \coloneqq \ket{\chi_{jk}}_{A_1B_1}\otimes \ket{\chi_{lm}}_{A_2B_2}\, ,
\end{equation*}
with $1\leq j,k\leq d_1$, $1\leq l,m\leq d_2$, and where 
\begin{equation*}
    \ket{\chi_{pq}} \coloneqq \left\{ \begin{array}{lll} \frac{1}{\sqrt2}(\ket{pq}+\ket{qp}) & & \text{if $p<q$,} \\[0.5ex] \ket{pp} & & \text{if $p=q$,} \\[0.5ex]
    \frac{1}{\sqrt2} (\ket{pq}-\ket{qp}) & & \text{if $p>q$.} \end{array}\right.
\end{equation*}
It is easy to see that the Schmidt rank of each vector $\ket{\Psi_{jklm}}$ is upper bounded by $4$ with respect to the bipartition $A:B = A_1A_2:B_1B_2$. Since $Z_{AB}^\Gamma$ is positive semidefinite this shows that 
\[
\SN(Z_{AB}^\Gamma)\leq 4\,.
\]
\end{proof}

We obtain the following immediate corollary.

\begin{cor}
For any $d\in \N$ with $d\geq 2$ there exist a PPT entangled state $\rho$ on $\C^d\otimes\C^d$ satisfying
\[ \SN(\rho)-\SN(\rho^\Gamma)\geq \ceil[\bigg]{\frac{d-1}{4}} - 4 \,. \]
\end{cor}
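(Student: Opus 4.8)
The plan is to obtain this corollary directly from Theorem~\ref{thm:SNdiff} by specialising to $d_1 = 2$ and reusing the dimension-embedding argument already employed in Corollary~\ref{corollary scaling}. With $d_1 = 2$ the operators furnished by Theorem~\ref{thm:ConcreteEx} live on $\C^{2d_2}\otimes\C^{2d_2}$, and Theorem~\ref{thm:SNdiff} guarantees $\SN(Z_{AB}) - \SN(Z_{AB}^\Gamma) \geq \ceil[\big]{d_2/2} - 4$. In the even case $d = 2d_2$ this applies verbatim: after normalising $Z_{AB}$ to unit trace we obtain a PPT state $\rho$ on $\C^d\otimes\C^d$ with $\SN(\rho)-\SN(\rho^\Gamma)\geq \ceil[\big]{d/4} - 4 \geq \ceil[\big]{(d-1)/4} - 4$.

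For the odd case $d = 2d_2 + 1$ I would first invoke Theorem~\ref{thm:SNdiff} with local dimensions $d_1 = 2$ and the smaller value $d_2$, producing an operator $\widetilde{Z}_{AB}\in\M_{2d_2}\otimes\M_{2d_2}$ with $\SN(\widetilde{Z}_{AB})-\SN(\widetilde{Z}_{AB}^\Gamma)\geq \ceil[\big]{d_2/2}-4 = \ceil[\big]{(d-1)/4}-4$, and then embed it into the target dimension via the direct sum $Z_{AB} = \widetilde{Z}_{AB}\oplus 0$ on $\C^d\otimes\C^d$, exactly as in the proof of Corollary~\ref{corollary scaling}. The one point meriting verification is that this embedding preserves all three relevant features: positivity together with the PPT property, the Schmidt number $\SN(\widetilde{Z}_{AB})$, and the Schmidt number $\SN(\widetilde{Z}_{AB}^\Gamma)$ of the partial transpose. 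The first two are immediate, since a state supported on a subspace $\C^{2d_2}\otimes\C^{2d_2}\subseteq\C^d\otimes\C^d$ has the same Schmidt number whether computed inside either space: every pure state in an optimal decomposition is forced into the supporting subspace, and its Schmidt rank is basis-independent.

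The slightly more delicate step is to check that partial transposition commutes with the embedding, i.e.\ that $(\widetilde{Z}_{AB}\oplus 0)^\Gamma = \widetilde{Z}_{AB}^\Gamma\oplus 0$. This holds because the embedding is block-diagonal with respect to a computational basis of $\C^d$ extending the one used on $\C^{2d_2}$, so transposing the $B$ factor maps the nontrivial block to itself and annihilates the complement. Granting this identity, $\SN(Z_{AB}^\Gamma) = \SN(\widetilde{Z}_{AB}^\Gamma)$, the claimed difference is inherited, and the states are genuinely PPT entangled in the regime where the right-hand side is positive (which requires $d_2\geq 3$, i.e.\ $\SN\geq 2$). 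The main obstacle, such as it is, is therefore purely bookkeeping — the parity split and the commutation of partial transpose with the direct-sum embedding — and no estimate beyond Theorem~\ref{thm:SNdiff} is required.
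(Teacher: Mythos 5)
Your proposal is correct and follows essentially the same route as the paper, whose proof is the one-line remark that the statement ``follows by combining Theorem~\ref{thm:SNdiff} with Corollary~\ref{corollary scaling}''. You have simply spelled out what that combination means in detail --- specialising to $d_1=2$, splitting into even/odd $d$, and checking that the direct-sum embedding commutes with partial transposition and preserves Schmidt numbers --- all of which is implicit in the paper's argument and verified correctly in your write-up.
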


\begin{proof}
The statement follows by combining Theorem \ref{thm:SNdiff} with Corollary~\ref{corollary scaling}.
\end{proof}

\subsection{States which are invariant under partial transpose and have high Schmidt number}

The results obtained in the previous subsection show that the Schmidt numbers of the states constructed in Theorem \ref{thm:Family} can change dramatically by applying a partial transposition. It is therefore a natural question to ask what values the Schmidt number can take on states that are invariant under partial transposition. Note for a bipartite state there are two possibilities to define invariance under partial transposition depending on which subsystem we choose the transposition to act on. Since we will aim at states on $\C^d\otimes \C^d$ (i.e. where the two subsystems have the same dimension) we will use the term PT-invariant and it will be clear from context which of the two possibilities we choose.

In Section \ref{sec:UpperBounds} we show that PT-invariant states, where the transposition is taken on the smaller subsystem, the Schmidt number cannot reach the highest possible value. In this section, we show that nevertheless there do exist states that are invariant under partial transposition and have a Schmidt number which scales linearly with the local dimension.

In the following we denote by $\ket{+i}, \ket{-i}\in \C^2$ the pure states 
\[
\ket{+i} = \frac{1}{\sqrt{2}}(\ket{0}+i\ket{1})\text{ and } \ket{-i} = \frac{1}{\sqrt{2}}(\ket{0}-i\ket{1})
\]
where $\{\ket{0},\ket{1}\}$ denotes the computational basis of $\C^2$ (the standard basis for our transposition). Note that $\ket{+i}$ and $\ket{-i}$ are orthonormal. The following lemma is probably well known. 

\begin{lem}
Given a PPT state $\rho_{AB}$ on $\C^{d_A}\otimes\C^{d_B}$, the state $\tilde{\rho}_{ABB'}$ on $\C^{d_A}\otimes\C^{d_B}\otimes\C^2$ given by 
\[
\tilde{\rho}_{ABB'} = \rho_{AB}\otimes \proj{+i}{+i}_{B'} + \rho^{\Gamma_B}_{AB}\otimes \proj{-i}{-i}_{B'}
\]
is invariant under partial transposition on $BB'$.
\label{lem:MakePTInv}
\end{lem}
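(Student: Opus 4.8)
The plan is to show that $\tilde\rho_{ABB'}^{\Gamma_{BB'}} = \tilde\rho_{ABB'}$ by direct computation, exploiting the fact that the transposition on the qubit $B'$ swaps the two mutually orthogonal projectors $\proj{+i}{+i}$ and $\proj{-i}{-i}$. The key algebraic observation is that in the computational basis $\{\ket0,\ket1\}$ one has $\proj{+i}{+i} = \tfrac12(\id + Y)$ and $\proj{-i}{-i} = \tfrac12(\id - Y)$, where $Y$ is the Pauli matrix $\ketbra01 \cdot(-i) + \ketbra10\cdot i$; since $Y$ is purely imaginary and antisymmetric, $\trans_2(Y) = Y^{\top} = -Y$, and therefore $\trans_2\!\left(\proj{+i}{+i}\right) = \proj{-i}{-i}$ and vice versa. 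Equivalently and more transparently, $\ket{+i}$ and $\ket{-i}$ are complex conjugates of each other in the computational basis, so transposition (which for a rank-one projector is complex conjugation of its matrix) exchanges them.

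First I would write out the partial transpose as a tensor product of the two partial transposes acting on the independent factors $B$ and $B'$, i.e.
\[
\tilde\rho_{ABB'}^{\Gamma_{BB'}} = (\ident_A\otimes\trans_B\otimes\trans_{B'})(\tilde\rho_{ABB'}).
\]
Applying this to the defining expression and using that the transposition factorizes across the tensor product $\rho_{AB}\otimes\proj{+i}{+i}_{B'}$, I would compute each of the two summands separately:
\[
(\ident_A\otimes\trans_B)(\rho_{AB}) = \rho_{AB}^{\Gamma_B}, \qquad \trans_{B'}\!\left(\proj{+i}{+i}_{B'}\right) = \proj{-i}{-i}_{B'},
\]
and the analogous identities for the second summand, where $\trans_{B'}\!\left(\proj{-i}{-i}\right) = \proj{+i}{+i}$ and $(\ident_A\otimes\trans_B)(\rho_{AB}^{\Gamma_B}) = \rho_{AB}$ because the partial transpose is an involution.

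Putting these together, the first summand $\rho_{AB}\otimes\proj{+i}{+i}_{B'}$ maps to $\rho_{AB}^{\Gamma_B}\otimes\proj{-i}{-i}_{B'}$, and the second summand $\rho_{AB}^{\Gamma_B}\otimes\proj{-i}{-i}_{B'}$ maps to $\rho_{AB}\otimes\proj{+i}{+i}_{B'}$. These are exactly the two summands we started with, merely interchanged, so the sum is unchanged and $\tilde\rho_{ABB'}^{\Gamma_{BB'}} = \tilde\rho_{ABB'}$. I would also note in passing that $\tilde\rho_{ABB'}$ is a genuine state: it is positive because $\rho_{AB}\geq 0$ and, by the PPT hypothesis, $\rho_{AB}^{\Gamma_B}\geq 0$, while the two qubit projectors are orthogonal; the trace is $\tr\rho_{AB} + \tr\rho_{AB}^{\Gamma_B} = 2$, so a normalization factor of $1/2$ is implicit (or the statement is read for the unnormalized convention already in use). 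There is no real obstacle here — the only point requiring care is the bookkeeping of which transposition acts where and the correct identification $\trans_{B'}(\proj{\pm i}{\pm i}) = \proj{\mp i}{\mp i}$, which hinges entirely on the complex-conjugation behaviour of $\ket{\pm i}$ under transposition in the chosen computational basis.
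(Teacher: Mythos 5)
Your proof is correct and follows essentially the same route as the paper's: both rest on the identity $\trans\left(\proj{\pm i}{\pm i}\right) = \proj{\mp i}{\mp i}$, after which the partial transposition on $BB'$ simply swaps the two summands. Your additional remarks (the Pauli-$Y$ justification, positivity, and the normalization factor of $1/2$) are sound elaborations of details the paper leaves implicit.
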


\begin{proof}
Note that 
\[
\trans(\proj{+i}{+i}) = \proj{-i}{-i}\,.
\]
And it thus easily follows that
\[
\tilde{\rho}_{ABB'}^{\Gamma_{BB'}} = \rho_{AB}^{\Gamma_B}\otimes \trans(\proj{+i}{+i}_{B'}) + \rho_{AB}\otimes \trans(\proj{-i}{-i}_{B'}) = \tilde{\rho}_{ABB'}\,.
\]

\end{proof}

With the previous lemma we can easily modify the family of states constructed in Theorem \ref{thm:Family} to make it invariant under partial transposition. To keep the discussion simple we will only state the result on the maximal Schmidt number that can be obtained using this construction. 

\begin{thm}
For any $d\geq 4$ there exists a PT-invariant state $\rho$ on $\C^d\otimes\C^d$ satisfying 
\[ \SN\lb \rho\rb\geq \begin{cases}\ceil[\big]{\frac{d-2}{8}},& \text{ if }d\text{ is even,}\vspace{0.1cm}\\\ceil[\big]{\frac{d-3}{8}},&\text{ else.} \end{cases}\, \]
\end{thm}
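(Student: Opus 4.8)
The plan is to combine the explicit PPT states of Theorem~\ref{thm:ConcreteEx} with the qubit-doubling trick of Lemma~\ref{lem:MakePTInv}. Concretely, I would start from the (normalised) operator $Z_{AB}$ of Theorem~\ref{thm:ConcreteEx} with $d_1=2$ and a free parameter $d_2\geq 2$; this is a PPT state on $\C^{2d_2}\otimes\C^{2d_2}$ with $\SN(Z_{AB})\geq\ceil{d_2/2}$. Feeding $\rho_{AB}=Z_{AB}$ into Lemma~\ref{lem:MakePTInv} produces a state $\tilde\rho_{ABB'}$ on $\C^{2d_2}\otimes\C^{2d_2}\otimes\C^2$ that is invariant under partial transposition on $BB'$, i.e.\ PT-invariant with respect to the bipartition $A:BB'$.

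The central point is to lower bound $\SN(\tilde\rho_{ABB'})$ with respect to $A:BB'$, and I would argue that it is at least $\SN(Z_{AB})$. Indeed, projecting the ancilla $B'$ onto $\ket{+i}$ and renormalising is a one-sided local operation on the $BB'$ side; since $\braket{+i}{-i}=0$ this operation sends $\tilde\rho_{ABB'}$ back to $Z_{AB}$. Because the Schmidt number cannot increase under local operations (a local channel never raises the Schmidt rank of any pure state appearing in a decomposition), we obtain $\SN(\tilde\rho_{ABB'})\geq\SN(Z_{AB})\geq\ceil{d_2/2}$.

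It then remains to fit $\tilde\rho_{ABB'}$ into the square format $\C^d\otimes\C^d$. Here the two subsystems have dimensions $2d_2$ (for $A$) and $4d_2$ (for $BB'$), so I would embed $A$ into $\C^{4d_2}$ and then both factors into $\C^d$ via the standard coordinate isometry $V\colon\C^{4d_2}\hookrightarrow\C^d$, applying $V\otimes V$. Since $V$ has real matrix entries the partial transposition on the second factor commutes with this embedding, so PT-invariance is preserved, while the Schmidt number is unchanged because $V\otimes V$ is a local isometry. Choosing $d_2=\floor{d/4}$ (legitimate once $d\geq 8$, so that $d_1=2\leq d_2$) then yields a PT-invariant state on $\C^d\otimes\C^d$ with $\SN\geq\ceil{\floor{d/4}/2}$.

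Finally I would reconcile $\ceil{\floor{d/4}/2}$ with the two stated expressions by an elementary case analysis on $d\bmod 8$, verifying that it equals $\ceil{(d-2)/8}$ for even $d$ and $\ceil{(d-3)/8}$ for odd $d$; the residual small cases $4\leq d\leq 7$ are trivial, since both formulas evaluate to $1$ and any PT-invariant separable state (for instance $\one_{d^2}/d^2$) already has $\SN=1$. I expect the only genuinely delicate step to be the monotonicity claim $\SN(\tilde\rho_{ABB'})\geq\SN(Z_{AB})$ under the local filtering on $B'$; everything else reduces to dimension bookkeeping and ceiling/floor arithmetic.
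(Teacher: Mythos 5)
Your proposal is correct and follows essentially the same route as the paper: the same combination of Theorem~\ref{thm:ConcreteEx} with $d_1=2$, Lemma~\ref{lem:MakePTInv}, monotonicity of the Schmidt number under the local filter $\one_{AB}\otimes\bra{+i}_{B'}$ (which recovers the original state since $\braket{+i}{-i}=0$), and real-isometry padding to equalise the local dimensions. The only cosmetic differences are that the paper pads the $A$ side by tensoring with $\proj{0}{0}_{A'}$ and handles odd $d$ by a separate $\oplus\,0$ embedding, whereas you absorb both steps into a single coordinate embedding with $d_2=\floor{d/4}$; the resulting bounds coincide.
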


\begin{proof}
Consider first the case where $d$ is even. For $d'=d/2$ consider the PPT state $\sigma_{AB}$ on $\C^{d'}\otimes\C^{d'}$ constructed in Corollary \ref{corollary scaling} with 
\[
\SN\lb \sigma_{AB}\rb\geq \ceil[\bigg]{\frac{d'-1}{4}}\,.
\]
By Lemma \ref{lem:MakePTInv} the state
\[
\tilde{\sigma}_{ABB'} = \sigma_{AB}\otimes \proj{+i}{+i}_{B'} + \sigma^{\Gamma_B}_{AB}\otimes \proj{-i}{-i}_{B'},
\]
on $\C^{d'}\otimes\C^{d'}\otimes\C^2$ is invariant under partial transposition on $BB'$. Since the Schmidt number is decreasing under the application of local completely positive maps (see \cite{Terhal00}), and 
\[
\sigma_{AB} = (\one_{AB}\otimes \bra{+i}_{B'})\tilde{\sigma}_{ABB'}(\one_{AB}\otimes \ket{+i}_{B'})\, ,
\]
we have that, with respect to the bipartition $A:BB'$, 
\[
\SN\lb \tilde{\sigma}_{ABB'}\rb\geq \ceil[\bigg]{\frac{d'-1}{4}}\, .
\]
Now in order to make the two local dimensions equal we just consider the state
\[
\rho_{AA'BB'} = \tilde{\sigma}_{ABB'}\otimes \proj{0}{0}_{A'}\,
\]
on $\C^{d'}\otimes\C^2\otimes\C^{d'}\otimes\C^2$. With respect to the bipartition $AA':BB'$, we have
\[
\SN\lb \rho_{AA'BB'}\rb\geq \ceil[\bigg]{\frac{d'-1}{4}}\, .
\]
Since $d=2d'$ the statement of the theorem follows. 

In the case where $d$ is odd we can consider the state $\rho_{AA'BB'}$ on $\C^{d-1}\otimes \C^{d-1}$ obtained from the above construction. Then the embedding $\rho_{AA'BB'}\oplus 0$ of this state into $\M_d\otimes \M_d$ (extending each of the local dimensions by $1$) satisfies the Schmidt number bound stated in the theorem.

\end{proof}

\section{Typical Schmidt number of PPT states in high dimensions}
\label{sec:HighDim}

Having constructed examples of PPT states requiring high dimensions it is now natural to ask: \emph{Is this phenomenon generic?} In other words we want to find out whether PPT states with high Schmidt numbers are just an oddity that may physically not be relevant. We will approach this question from two angles: First, we compare the relative volumes occupied by PPT states and states with bounded Schmidt number respectively in the set of all states. Then, we derive probabilities for random states to be both PPT and of high Schmidt number. 

To start this section, let us fix some further notation. For any $1\leq k\leq d$, we denote by $\mathrm{SN}_k(\C^d\,{:}\,\C^d)$ the set of states on $\C^d\otimes\C^d$ which have Schmidt number at most $k$. In particular, $\mathrm{SN}_1(\C^d\,{:}\,\C^d)$ is just the set of separable states and $\mathrm{SN}_d(\C^d\,{:}\,\C^d)$ is the set of all states on $\C^d\otimes\C^d$.
We also denote by $\PPTSet(\C^d\,{:}\,\C^d)$ the set of states on $\C^d\otimes\C^d$ which are PPT.

\subsection{Sizes of the sets of bounded Schmidt number states vs PPT states}

To begin with, we would like to know how the sizes of the sets $\mathrm{SN}_k(\C^d\,{:}\,\C^d)$ and $\PPTSet(\C^d\,{:}\,\C^d)$ compare to one another. Here, we are interested in the high-dimensional regime, i.e.~when $d$ is large. There are two size parameters that we will be looking at: the \emph{mean width} and the \emph{volume radius}. Informally, given a convex body $\mathrm{K}$, its mean width is defined as the distance between an origin point and the tangent hyperplane to $\mathrm{K}$ in some direction, averaged over all directions, while its volume radius is defined as the radius of the Euclidean ball which would have the same volume as $\mathrm{K}$. The precise mathematical definitions appear below.

The mean width is defined as follows: Given $\mathrm{K}(\C^n)$ a convex set of states on $\C^n$,
\[ w\big( \mathrm{K}(\C^n) \big) \coloneqq \E\, \sup \left\{ \tr(X\sigma) :\ \sigma\in \mathrm{K}(\C^n) \right\} \, , \]
for $X$ uniformly distributed on the Hilbert--Schmidt unit sphere of the set of trace-$0$ Hermitian operators on $\C^n$. Equivalently, we can rewrite
\begin{equation} \label{eq:defw} w\big( \mathrm{K}(\C^n) \big) = \frac{1}{\E\|G\|_2} \,\E \,\sup \left\{ \tr(G\sigma):\ \sigma\in\mathrm{K}(\C^n) \right\} \underset{n\rightarrow\infty}{\sim} \frac{1}{n}\,\E \, \sup \left\{ \tr(G\sigma):\ \sigma\in\mathrm{K}(\C^n) \right\}\, , \end{equation}
for $G$ a trace-$0$ matrix from the Gaussian Unitary Ensemble (GUE) on $\C^n$ (see e.g.~\cite[Chapter 2]{AGZ} for a proof of the last asymptotic estimate). Since we will be manipulating repeatedly such random matrix in the remainder of this section, let us recall here its precise definition: Start from $\tilde{G}$ an $n\times n$ matrix whose entries are independent complex Gaussian variables (with mean $0$ and variance $1$). Then, $G'=(\tilde{G}+\tilde{G}^{\dagger})/\sqrt{2}$ is an $n\times n$ GUE matrix and $G=G'-(\tr G')\openone/n$ is a trace-$0$ $n\times n$ GUE matrix. In other words, $G'$, resp.~$G$, is simply the standard Gaussian vector in the space of Hermitian operators, resp.~trace-$0$ Hermitian operators, on $\C^n$.

The volume radius is defined as follows: Given $\mathrm{K}(\C^n)$ a convex set of states on $\C^n$,
\[ \mathrm{vrad}\big( \mathrm{K}(\C^n) \big) \coloneqq \left( \frac{\mathrm{Vol}\big( \mathrm{K}(\C^n) \big)}{\mathrm{Vol}\big( \mathrm{B}^{n^2-1} \big)} \right)^{1/(n^2-1)}\, , \]
where $\mathrm{B}^{n^2-1}$ stands for the Hilbert--Schmidt unit ball of the set of trace-$1$ Hermitian operators on $\C^n$ (which can be identified with the real Euclidean unit ball of dimension $n^2-1$) and $\mathrm{Vol}(\cdot)$ for the $(n^2-1)$-dimensional Lebesgue measure.

By Urysohn's inequality (see e.g.~\cite[Corollary 1.4]{Pisier}), we know that
\[ \mathrm{vrad}\big(\mathrm{K}(\C^n)\big) \leq w\big(\mathrm{K}(\C^n)\big)\, ,\]
and in many cases, these two quantities are actually of the same order.

The mean width and the volume radius of the sets of bounded Schmidt number states and PPT states were estimated in~\cite{SWZ} and~\cite{AS}, respectively. We will restate these results for completeness:

\begin{theorem}[Size of the set of bounded Schmidt number states {\cite[Section 4]{SWZ}}] \label{th:w-vrad(S_k)}
There exist universal constants $c,C>0$ such that, for any $d\in\N$ and $1\leq k\leq d$, we have
\[ c\,\frac{\sqrt{k}}{d\sqrt{d}} \leq \mathrm{vrad}\big( \mathrm{SN}_k(\C^d\,{:}\,\C^d) \big) \leq w\big( \mathrm{SN}_k(\C^d\,{:}\,\C^d) \big) \leq C\,\frac{\sqrt{k}}{d\sqrt{d}}\, . \]
\end{theorem}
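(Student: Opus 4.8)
The plan is to estimate both the volume radius and the mean width of $\SN_k(\C^d:\C^d)$ by sandwiching this set between two sets whose sizes are already understood, and then to exploit the homogeneity of these two size parameters under rescaling. The natural reference set is $\SN_1(\C^d:\C^d)$, the set of separable states, whose mean width and volume radius were determined (up to universal constants) in \cite{AS} to be of order $1/(d\sqrt d)$. The intuition is that a state of Schmidt number at most $k$ is, loosely speaking, built from pure states of Schmidt rank at most $k$, and such a pure state is a superposition living in a $k$-dimensional ``local'' subspace; the factor $\sqrt k$ in the claimed bounds should come precisely from the extra freedom in choosing these $k$-dimensional blocks.

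For the upper bound on $w\big(\SN_k\big)$, I would produce a good outer approximation of $\SN_k$. The mean width is the expected support function of the set against a GUE matrix $G$, so I want to bound $\E\sup\{\tr(G\sigma):\SN(\sigma)\le k\}$. By the variational characterization of Schmidt number (Terhal–Vidal, quoted in the preliminaries: $\SN(\rho)\le k$ iff $\rho$ survives all $k$-positive maps), or more directly by writing any such $\sigma$ as a convex combination of projectors onto Schmidt-rank-$\le k$ vectors, the supremum reduces to $\E\max_{\SR(\ket\psi)\le k}\bra\psi G\ket\psi$. This maximum is the largest ``$k$-Schmidt-rank numerical value'' of $G$, which can be controlled by a net argument over the (low-dimensional) set of Schmidt-rank-$\le k$ unit vectors together with standard Gaussian concentration for $\bra\psi G\ket\psi$; the metric entropy of this set carries the $k$ dependence and should yield the $\sqrt k/(d\sqrt d)$ order after dividing by $\E\|G\|_2\sim d$.

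For the lower bound on $\mathrm{vrad}\big(\SN_k\big)$, I would instead exhibit a large subset of $\SN_k$ and estimate its volume from below. A clean choice is a suitably scaled copy of the separable set placed on a $k$-dimensional local subsystem, or a Minkowski-type inflation of the separable body: since $\SN_1\subseteq\SN_k$, one already gets the separable lower bound of order $1/(d\sqrt d)$, and the remaining task is to gain the $\sqrt k$ factor by thickening in the directions transverse to $\SN_1$ that are still compatible with Schmidt number $k$. Here the homogeneity of $\mathrm{vrad}$ (it scales linearly under dilation of the body) is what converts a volume estimate in the ambient $(d^2-1)$-dimensional space into the stated radius. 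The two bounds then pinch $\mathrm{vrad}\le w$ via Urysohn's inequality, quoted above.

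The main obstacle I expect is the upper bound on the mean width, specifically the net/entropy estimate over Schmidt-rank-$\le k$ vectors: one must show that the relevant metric entropy contributes only a $\sqrt k$ (and not a larger power of $k$ or a spurious $\log d$) after the Gaussian concentration is integrated, and one must be careful that the $\E\|G\|_2\sim d$ normalization is used in the correct asymptotic regime $1\le k\le d$ uniformly. Getting clean \emph{universal} constants $c,C$ independent of both $d$ and $k$ — rather than constants that degrade as $k\to d$ or $k\to1$ — is the delicate point, and is presumably where the detailed work of \cite{SWZ} is concentrated.
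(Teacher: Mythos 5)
First, a point of context: the paper itself does not prove this theorem at all --- it is restated verbatim from \cite[Section 4]{SWZ} (with the case $k=1$ going back to \cite{AS}), so your proposal has to be judged against the proofs in those references. Your \emph{upper bound} strategy is essentially the one used there: since $\mathrm{SN}_k(\C^d\,{:}\,\C^d)$ is the convex hull of projectors onto Schmidt-rank-$\leq k$ unit vectors, the support function satisfies $\sup\{\tr(G\sigma)\} = \max_{\SR(\psi)\leq k}\bra{\psi}G\ket{\psi}$ (this uses only linearity of $\tr(G\,\cdot)$, not the $k$-positive-map duality), and this is a Gaussian process indexed by a set of real dimension of order $kd$, giving $\E\max \lesssim \sqrt{kd}$. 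Two corrections, though: the normalization is $\E\|G\|_2 \asymp d^2$, not $d$, since $G$ is a GUE matrix on $\C^{d^2}$ (it is $\sqrt{kd}/d^2 = \sqrt{k}/(d\sqrt{d})$ that produces the claimed bound); and a single-scale net plus union bound, which is what ``net argument together with standard Gaussian concentration'' describes, yields a spurious $\sqrt{\log(d/k)}$ factor because the Lipschitz constant in $\psi$ involves $\|G\|_\infty \sim d$. One needs Dudley-type chaining with the canonical metric $\bigl\|\proj{\psi}{\psi}-\proj{\phi}{\phi}\bigr\|_2 \leq 2\|\psi-\phi\|_2$, under which the entropy integral converges and gives $\sqrt{kd}$ with no logarithm. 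You flagged exactly this risk yourself, and it is fixable.

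The genuine gap is the \emph{lower bound} on $\mathrm{vrad}\big(\mathrm{SN}_k(\C^d\,{:}\,\C^d)\big)$. ``Thickening $\mathrm{SN}_1$ in the directions transverse to it'' is not an argument: you give no mechanism that produces the $\sqrt{k}$ gain, and the natural ways to implement the idea fail by polynomial (not constant) factors. The inscribed-ball route is hopeless: the largest Hilbert--Schmidt ball of separable states around $\id/d^2$ has radius $\asymp 1/d^2$ \cite{Howard}, which yields only $\mathrm{vrad}\geq c/d^2$ and already misses the true order $1/d^{3/2}$ at $k=1$; and explicit subsets such as $\{\tau\otimes\sigma\}$, with $\tau$ an arbitrary state on $\C^k\otimes\C^k$ and $\sigma$ separable on the complementary factors, have measure zero, their convex hull being precisely the object one cannot estimate directly. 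What \cite{AS} and \cite{SWZ} actually do is a duality argument: setting $K = \mathrm{SN}_k - \id/d^2$ inside the trace-zero hyperplane, the Bourgain--Milman inverse Santal\'o inequality gives $\mathrm{vrad}(K)\,\mathrm{vrad}(K^{\circ}) \geq c$, so by Urysohn it suffices to prove the \emph{upper} bound $w(K^{\circ}) \lesssim d^{3/2}/\sqrt{k}$; and $K^{\circ}$ is, up to affine identifications, the set of trace-normalized $k$-block-positive operators (Choi matrices of $k$-positive maps), whose mean width requires a second, genuinely different Gaussian-process estimate. The $\sqrt{k}$ in the lower bound thus comes from the shrinking of this dual body as $k$ grows, not from any direct enlargement of $\mathrm{SN}_1$ inside $\mathrm{SN}_k$. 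This duality step, together with the deep Bourgain--Milman input behind it, is the missing idea; without it (or an equivalent volumetric duality) your outline cannot be completed.
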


\begin{theorem}[Size of the set of PPT states {\cite[Theorem 4]{AS} and~\cite[Theorem 9.13]{ASbook}}] \label{th:w-vrad(P)}
For any $d\in\N$, we have
\[ \frac{1}{4d} \leq \mathrm{vrad}\big( \PPTSet(\C^d\,{:}\,\C^d) \big) \leq w\big( \PPTSet(\C^d\,{:}\,\C^d) \big) \leq \frac{2}{d}\, . \]
\end{theorem}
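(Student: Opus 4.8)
The plan is to exploit the structural description of the PPT body as an intersection of two congruent copies of the full state set. Write $\cD = \cD(\C^d\otimes\C^d)$ for the set of all density matrices on $\C^{d^2}$, and let $\Gamma$ denote the partial transpose $\rho\mapsto\rho^{\Gamma_B}$, a linear involution that is an isometry for the Hilbert--Schmidt inner product and that fixes the maximally mixed state $\rho_* = \one_{d^2}/d^2$. Then $\PPTSet(\C^d\,{:}\,\C^d) = \cD\cap\Gamma(\cD)$ is an intersection of two bodies that are congruent (via $\Gamma$) and share the centroid $\rho_*$. The middle inequality $\vrad\big(\PPTSet(\C^d\,{:}\,\C^d)\big)\leq w\big(\PPTSet(\C^d\,{:}\,\C^d)\big)$ is free: it is exactly Urysohn's inequality, already quoted above. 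So the work is entirely in the two outer estimates.

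For the upper bound I would use monotonicity of the support function under inclusion. Since $\PPTSet(\C^d\,{:}\,\C^d)\subseteq\cD$, we get $w\big(\PPTSet(\C^d\,{:}\,\C^d)\big)\leq w(\cD)$, and it remains to bound the mean width of the full state set. Here the support function is explicit: $\sup_{\sigma\in\cD}\tr(G\sigma)=\lambda_{\max}(G)$, so that $w(\cD)=\E\lambda_{\max}(G)/\E\|G\|_2$ for a trace-$0$ GUE matrix $G$ of size $n=d^2$. The semicircle-edge estimate $\E\lambda_{\max}(G)\leq 2\sqrt n = 2d$ (see~\cite{AGZ}) together with $\E\|G\|_2\sim n = d^2$ gives $w(\cD)\sim 2/d$; tracking the constants in these random-matrix bounds yields the stated bound $\leq 2/d$.

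The lower bound on $\vrad$ is the crux, and here the intersection structure becomes an obstacle rather than a help, since the volume of an intersection of two bodies is not controlled by their individual volumes. My plan is to pass to the polar, where intersection becomes a convex hull and support functions become tractable. Working in the trace-$0$ hyperplane with polarity taken about $\rho_*$, one has $\PPTSet(\C^d\,{:}\,\C^d)^\circ = \conv\big(\cD^\circ\cup\Gamma(\cD^\circ)\big)$, using $(\Gamma\cD)^\circ=\Gamma(\cD^\circ)$ since $\Gamma$ is orthogonal. The support function of a hull is the maximum of the two support functions, and a direct computation identifies the support function of $\cD^\circ$ with $-d^2\lambda_{\min}(G)$. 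Crucially, because $\Gamma$ is an orthogonal transformation of the real space of trace-$0$ Hermitian matrices and the GUE is rotation-invariant, $G^\Gamma$ has the same law as $G$; hence $\E\max\big(-d^2\lambda_{\min}(G),\,-d^2\lambda_{\min}(G^\Gamma)\big)\leq 2\, d^2\,\E\big(-\lambda_{\min}(G)\big)\lesssim 4d^3$. Dividing by $\E\|G\|_2\sim d^2$ gives $w\big(\PPTSet(\C^d\,{:}\,\C^d)^\circ\big)\lesssim 4d$, which is where the factor $4$ in $1/(4d)$ originates. Combining Urysohn for the polar with the Bourgain--Milman reverse-Santal\'o inequality, $\vrad(K)\,\vrad(K^\circ)\geq c$, then yields $\vrad\big(\PPTSet(\C^d\,{:}\,\C^d)\big)\geq c/(4d)$.

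The main obstacle is obtaining the clean constant $1/(4d)$ rather than $c/d$ with an unspecified universal $c$: the passage back from the polar through Bourgain--Milman is lossy, since equality in Santal\'o's inequality holds only for ellipsoids. To nail the stated constant one must replace this soft duality step by a sharper, more hands-on volume estimate --- for instance by exhibiting an explicit large body inscribed in $\PPTSet(\C^d\,{:}\,\C^d)$, such as a convex combination of $\rho_*$ with a controlled perturbation set that is stable under $\Gamma$, and estimating its volume directly, as is done in the cited references. Everything else in the argument is either exact or a standard random-matrix estimate.
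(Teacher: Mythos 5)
First, a point of order: the paper does not prove this theorem at all --- it is quoted ``for completeness'' from \cite[Theorem 4]{AS} and \cite[Theorem 9.13]{ASbook} --- so your attempt can only be compared against the arguments in those references. Your middle inequality is indeed just Urysohn, and your upper bound follows the same route as the references: monotonicity of the support function under the inclusion $\PPTSet(\C^d\,{:}\,\C^d)\subseteq\cD$, the identity $\sup_{\sigma\in\cD}\tr(G\sigma)=\lambda_{\max}(G)$, and the non-asymptotic GUE edge estimate $\E\,\lambda_{\max}(G)\leq 2\sqrt{n}$ with $n=d^2$. (One caveat: to keep the clean constant $2$ you also need a lower bound on the denominator $\E\|G\|_2$ in \eqref{eq:defw}; writing ``$\E\|G\|_2\sim n$'' hides exactly the bookkeeping that the references carry out, but this is a routine repair.)

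The genuine gap is in the lower bound, and you have diagnosed it yourself: the polar/Bourgain--Milman route structurally cannot yield the constant $1/4$. Reverse Santal\'o gives $\vrad(K)\,\vrad(K^\circ)\geq c$ only for an unspecified universal $c<1$; moreover $\PPTSet$ is not centrally symmetric, so one must invoke the non-symmetric form of Bourgain--Milman, where the constant depends on taking polarity at an appropriate center (Santal\'o point versus barycenter), a further loss. Your chain therefore proves only $\vrad\big(\PPTSet(\C^d\,{:}\,\C^d)\big)\geq c/(4d)$, a strictly weaker statement, and the remark that the factor $4$ ``originates'' in your computation is moot once it is multiplied by an unknown $c$. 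The mechanism used in the cited references never leaves the primal side: it is the Milman--Pajor inequality, which asserts that for convex bodies $K,L\subset\R^N$ with a common barycenter at the origin one has
\begin{equation*}
\vol(K\cap L)\,\vol(K-L)\;\geq\;\vol(K)\,\vol(L)\,,
\end{equation*}
where $K-L$ denotes the Minkowski difference body. Applying this with $K=\cD$ and $L=\Gamma(\cD)$ --- which have common barycenter $\rho_*$ and equal volumes, since $\Gamma$ is an isometry --- gives $\vrad\big(\PPTSet\big)\geq\vrad(\cD)^2/\vrad\big(\cD-\Gamma(\cD)\big)$, and the difference body is then controlled by Urysohn, $\vrad\big(\cD-\Gamma(\cD)\big)\leq w\big(\cD-\Gamma(\cD)\big)=2\,w(\cD)$, so that everything reduces to the explicit two-sided estimates on $\vrad(\cD)$ and $w(\cD)$ for the full state body. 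All constants in this chain are explicit, which is how a bound of the form $1/(4d)$, rather than $c/d$, is obtained. Finally, your proposed repair --- inscribing an explicit $\Gamma$-stable body --- is not what the references do and is unlikely to work as sketched: the natural explicit inscribed set, the Euclidean ball around $\rho_*$ contained in both $\cD$ and $\Gamma(\cD)$, has radius of order $1/d^2$, far below the target $1/(4d)$.
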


 Theorems~\ref{th:w-vrad(S_k)} and~\ref{th:w-vrad(P)} show that on $\C^d\otimes\C^d$ and for $k\ll d$ the volume of the set of PPT states is much bigger than the volume of the set of states which have Schmidt number at most $k$. Moreover, choosing $k=\alpha d$ for some $\alpha$ small enough (depending on the unknown constants $c$ and $C$) the theorems show that for $d$ large enough most PPT states on $\C^d\otimes\C^d$ have Schmidt numbers higher than $\alpha d$.

\subsection{Random construction of PPT states with high Schmidt number}

Let us now explain how one could exhibit bipartite quantum states which have both properties of being PPT and having a high Schmidt number. These states will be constructed at random, in such a way that one can argue that, with high probability, they meet these two conditions simultaneously. Our random state model is basically the same as the one considered in~\cite[Section V]{GHLS}, which we recall here. Let $G$ be a trace-$0$ GUE matrix on $\C^d\otimes\C^d$, and define the associated `maximally mixed + Gaussian noise' state on $\C^d\otimes\C^d$ as
\begin{equation} \label{eq:random-state} \rho := \frac{1}{d^2}\left(\id + \frac{\alpha }{d}G \right)\, , \end{equation}
where $0<\alpha<1/2$ is a fixed parameter. 

\begin{theorem} \label{prop:PPT}
Let $\rho$ be a random state on $\C^d\otimes\C^d$, as defined by equation~\eqref{eq:random-state}. Then,
\[ \P\left( \rho\in\PPTSet(\C^d\,{:}\,\C^d) \right) \geq 1-2e^{-c_{\alpha}d^2}\, , \]
where $c_{\alpha}>0$ is a constant depending only on the parameter $\alpha$.
\end{theorem}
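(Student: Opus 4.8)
The plan is to show that the random state $\rho = \frac{1}{d^2}\left(\id + \frac{\alpha}{d}G\right)$ is PPT with overwhelming probability by controlling the spectrum of its partial transpose. Since $\rho \geq 0$ requires the smallest eigenvalue of $\id + \frac{\alpha}{d}G$ to be nonnegative, and $\rho^{\Gamma_B} = \frac{1}{d^2}\left(\id + \frac{\alpha}{d}G^{\Gamma_B}\right)$ has the analogous requirement, the whole problem reduces to a \emph{norm bound} on a GUE-type matrix: I need $\frac{\alpha}{d}\|G^{\Gamma_B}\|_\infty \leq 1$, i.e.~$\|G^{\Gamma_B}\|_\infty \leq d/\alpha$. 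The key structural observation is that the partial transpose $G^{\Gamma_B}$ of a trace-$0$ GUE matrix on $\C^d\otimes\C^d$ is \emph{again} distributed (up to the same trace-$0$ normalization) as a GUE matrix on $\C^{d^2}$, because partial transposition is a real-orthogonal change of basis on the space of Hermitian matrices and the GUE distribution is invariant under such transformations. Thus both the positivity of $\rho$ and the positivity of $\rho^{\Gamma_B}$ follow from the same tail estimate applied to a $d^2 \times d^2$ GUE matrix.

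First I would recall the standard large-deviation bound for the operator norm of a GUE matrix: for an $N \times N$ GUE matrix $G'$ (with the normalization used in the excerpt, where entries have variance $1$), one has $\E\|G'\|_\infty \sim 2\sqrt{N}$, and more importantly a concentration inequality of the form $\P\left(\|G'\|_\infty \geq (2+t)\sqrt{N}\right) \leq e^{-c t^2 N}$ for some universal $c>0$ and all $t>0$ (this is a consequence of Gaussian concentration for the Lipschitz function $G' \mapsto \|G'\|_\infty$ together with the known value of the mean). Applying this with $N = d^2$ gives $\|G'\|_\infty \leq 3d$ except on an event of probability at most $e^{-cd^2}$ once $t$ is a fixed constant. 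Since the trace-$0$ projection only shifts the spectrum by $(\tr G')/N \cdot \id$, and $\tr G'$ is itself a Gaussian of standard deviation $O(\sqrt{N}) = O(d)$ that concentrates, the same type of bound holds for the trace-$0$ matrix $G$ (and hence for $G^{\Gamma_B}$ by the invariance remark).

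Second I would assemble the pieces quantitatively. On the good event $\|G\|_\infty \leq \beta d$ for a suitable constant $\beta$ (say $\beta = 3$ or so, absorbing the trace correction), we have $\frac{\alpha}{d}\|G\|_\infty \leq \alpha\beta$; choosing the threshold so that $\alpha\beta < 1$ is possible precisely because $\alpha < 1/2$ is fixed and small, which guarantees $\id + \frac{\alpha}{d}G \geq 0$, hence $\rho \geq 0$. The identical argument applied to $G^{\Gamma_B}$ (same distribution, same norm bound, same exceptional probability) yields $\rho^{\Gamma_B} \geq 0$. A union bound over these two events gives the stated $1 - 2e^{-c_\alpha d^2}$ with $c_\alpha$ depending on how far $\alpha\beta$ sits below $1$, i.e.~on $\alpha$ alone. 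I would carry the constant $\beta$ symbolically and optimize the exponent only at the end.

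The main obstacle is getting the constants to line up so that the final bound is nontrivial for the whole range $0 < \alpha < 1/2$, rather than merely for $\alpha$ sufficiently small. The crude estimate $\E\|G'\|_\infty \leq 2\sqrt{N}(1+o(1))$ gives $\frac{\alpha}{d}\|G\|_\infty \approx 2\alpha$ on average, which is exactly why the threshold $\alpha < 1/2$ appears: it is the borderline at which the expected norm of $\frac{\alpha}{d}G$ crosses $1$. To make the concentration argument work cleanly up to but not including $\alpha = 1/2$, I must use the \emph{sharp} leading constant $2$ in the mean norm rather than a loose bound, and then rely on the $e^{-ct^2N}$ Gaussian-concentration tail to push the deviation probability down to $e^{-c_\alpha d^2}$; the delicate point is that as $\alpha \to 1/2$ the admissible deviation $t \to 0$ and so $c_\alpha \to 0$, which is acceptable for the statement but requires care in tracking the dependence. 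Establishing the GUE-invariance of partial transposition rigorously — that $G^{\Gamma_B}$ has exactly the same law as $G$ — is the other point I would want to state precisely, though it is conceptually straightforward once one checks that partial transposition acts as an orthogonal involution on the real inner-product space of Hermitian operators and thus preserves the standard Gaussian measure.
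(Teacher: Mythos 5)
Your proof is correct, and its skeleton --- the observation that $G^{\Gamma}$ has the same law as $G$ because partial transposition is an orthogonal involution on the real Hilbert space of trace-zero Hermitian operators, followed by a spectral tail bound applied to each of $G$ and $G^{\Gamma}$ and a union bound over the two events $\rho\geq 0$ and $\rho^{\Gamma}\geq 0$ --- is exactly the paper's argument (which the paper in turn borrows from Proposition V.5 of the GHLS reference). The difference lies in the probabilistic ingredient supplying the tail. The paper invokes Aubrun's sharp small-deviation inequality at the spectral edge, $\P\left(\lambda_{\min}(H/d)<-(2+\varepsilon)\right)\leq e^{-c\varepsilon^{3/2}d^2}$, applied with $\varepsilon=1/\alpha-2>0$; you instead derive the tail from generic Gaussian concentration of the $1$-Lipschitz function $\lambda_{\max}$ (equivalently, the operator norm) around its mean, combined with the sharp edge location $\E\|G'\|_\infty\sim 2\sqrt{N}$ for $N=d^2$. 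Both routes yield a bound of the required form $1-2e^{-c_\alpha d^2}$ for every fixed $\alpha<1/2$, and your treatment of the two delicate points (the trace-zero correction, whose spectral shift is $O(1/d)$ and concentrates, and the degeneration $c_\alpha\to 0$ as $\alpha\to 1/2$) is sound. What each approach buys: yours is more self-contained, needing only standard Gaussian concentration plus the value of the mean, while the paper's citation of Aubrun packages the mean and the fluctuation into one estimate and gives a better constant in the critical regime --- you obtain $c_\alpha\sim(1/\alpha-2)^2$, whereas Aubrun's inequality gives $c_\alpha\sim(1/\alpha-2)^{3/2}$, which is larger (hence a stronger bound) precisely when $1/\alpha-2$ is small. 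One point to nail down if you write this up: replace the asymptotic $\E\|G'\|_\infty\sim2\sqrt{N}$ by a non-asymptotic statement (for GUE one has $\E\,\lambda_{\max}(G')\leq 2\sqrt{N}$ for all $N$), or note that finitely many small dimensions can be absorbed into $c_\alpha$, since the claimed inequality is vacuous whenever $2e^{-c_\alpha d^2}\geq 1$.
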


\begin{proof}
The argument is exactly the same as in the proof of~\cite[Proposition V.5]{GHLS}. We briefly repeat it here for the sake of completeness.
$G$ and $G^{\Gamma}$ are both trace-$0$ GUE matrices on $\C^d\otimes\C^d$. Hence, we know from~\cite{Aubrun}, that for $H$ being either $G$ or $G^{\Gamma}$, we have
\[ \forall\ \varepsilon>0,\ \P(\lambda_{\min}(H) < -(2+\varepsilon)) \leq e^{-c\varepsilon^{3/2}d^2}\, . \]
Applying this deviation probability estimate to $\varepsilon=1/\alpha-2>0$, we get by the union bound that
\[ \P(\rho\geq 0\ \text{and}\ \rho^{\Gamma}\geq 0 ) \geq 1- 2e^{-c(1/\alpha-2)^{3/2}d^2}\, , \]
which is precisely the advertised result.
\end{proof}

\begin{theorem} \label{prop:SR}
	Let $\rho$ be a random state on $\C^d\otimes\C^d$, as defined by equation~\eqref{eq:random-state}. Then, for any $1\leq k\leq c'_{\alpha}d$, we have
	\[ \P\left( \rho\notin\mathrm{SN}_k(\C^d\,{:}\,\C^d) \right) \geq 1-2e^{-cd}\, , \]
	where $c'_{\alpha}>0$ is a constant depending only on the parameter $\alpha$ and $c>0$ is a universal constant.
\end{theorem}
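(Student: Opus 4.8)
The plan is to exploit the duality set up in Theorem~\ref{th:w-vrad(S_k)}, reading $\rho$ as the maximally mixed state $\id/d^2$ displaced by the Gaussian fluctuation $\frac{\alpha}{d^3}G$. The key geometric observation is that $\mathrm{SN}_k(\C^d\,{:}\,\C^d)$ is a convex set containing $\id/d^2$, so its support function bounds how far it reaches in any direction, while the displacement of $\rho$ points \emph{in its own direction} $G$, where the inner product is large. Concretely, set $W(G)\coloneqq \sup\{\tr(G\sigma):\sigma\in\mathrm{SN}_k(\C^d\,{:}\,\C^d)\}$. If $\rho$ belonged to $\mathrm{SN}_k(\C^d\,{:}\,\C^d)$, taking $\sigma=\rho$ would give $\tr(G\rho)\leq W(G)$; since $G$ is trace-$0$ we have $\tr(G\,\id/d^2)=0$ and $\tr(G^2)=\|G\|_2^2$, so $\tr(G\rho)=\frac{\alpha}{d^3}\|G\|_2^2$. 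Hence it suffices to prove that, with the stated probability,
\[ \frac{\alpha}{d^3}\|G\|_2^2 > W(G)\, . \]

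First I would estimate the left-hand side. Since $G$ is a standard Gaussian vector in the $(d^4-1)$-dimensional real space of trace-$0$ Hermitian operators, $\|G\|_2^2$ concentrates sharply around $d^4-1$, and standard $\chi^2$ concentration gives $\|G\|_2^2\geq \tfrac12 d^4$ outside an event of probability at most $e^{-cd^4}$, so the left-hand side is at least $\frac{\alpha}{2}d$.

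Next I would bound the right-hand side via Gaussian concentration. The map $G\mapsto W(G)$ is a supremum of linear functionals over the set of states, hence Lipschitz with constant equal to the outer radius $\sup\{\|\sigma-\id/d^2\|_2:\sigma\text{ a state}\}\leq 1$. Its expectation factorises as $\E W(G)=\E\|G\|_2\cdot w\big(\mathrm{SN}_k(\C^d\,{:}\,\C^d)\big)$, because the isotropic Gaussian $G$ has norm $\|G\|_2$ and direction $G/\|G\|_2$ independent, with the latter uniform on the sphere appearing in the definition of $w$. Invoking Theorem~\ref{th:w-vrad(S_k)} together with $\E\|G\|_2\sim d^2$ yields $\E W(G)\leq C\sqrt{k}\,\sqrt d$, and Gaussian concentration then gives $W(G)\leq C\sqrt{k}\,\sqrt d+t$ outside an event of probability $e^{-t^2/2}$.

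Finally I would combine the two bounds. Choosing $c'_{\alpha}$ small enough that $C\sqrt{c'_{\alpha}}\leq \alpha/4$, the hypothesis $k\leq c'_{\alpha}d$ forces $\E W(G)\leq \frac{\alpha}{4}d$; taking $t=\frac{\alpha}{4}d$ makes $W(G)\leq \frac{\alpha}{2}d$ with high probability, which is then dominated by the lower bound $\frac{\alpha}{2}d$ on the left-hand side after a small sharpening of constants, and a union bound over the two exceptional events produces the exponentially small failure probability claimed. I expect the main obstacle to be the clean execution of this concentration step, namely pinning down the Lipschitz constant of $W$ and matching $\E W(G)$ to the mean width supplied by Theorem~\ref{th:w-vrad(S_k)}; the conceptual crux, however, is already the elementary observation that $\rho$'s displacement lies along its own direction $G$, so that the genuinely large inner product $\frac{\alpha}{d^3}\|G\|_2^2\sim\alpha d$ must be compared only against the reach $\sim\sqrt{k}/d^{3/2}$ of $\mathrm{SN}_k(\C^d\,{:}\,\C^d)$ in a typical direction.
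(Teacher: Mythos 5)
Your proposal is correct and takes essentially the same route as the paper's proof: your direct separation $\tr(\rho\, G) > \sup\{\tr(\sigma G):\ \sigma\in\mathrm{SN}_k(\C^d\,{:}\,\C^d)\}$ is exactly the paper's witness $M=\id-2G/\alpha d$ in disguise, and both arguments rest on the same two ingredients, namely the mean-width bound of Theorem~\ref{th:w-vrad(S_k)} to control the expected support function and Gaussian concentration applied to $\tr(\rho\, G)$ and to the support function. The only differences are cosmetic---you invoke $\chi^2$ concentration for $\|G\|_2^2$ and Borell--TIS with an explicitly computed Lipschitz constant where the paper cites the deviation estimates of~\cite{GHLS}---together with the small strict-inequality sharpening of constants that you already flag yourself.
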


\begin{proof}
The argument follows the exact same lines as in the proof of~\cite[Proposition V.6]{GHLS}. We might thus skip a few details here. The strategy is to exhibit a Hermitian operator $M$ on $\C^d\otimes\C^d$ which is with probability greater than $1-2e^{-cd}$ a witness of the fact that $\rho$ has Schmidt number larger than $k$. 

To begin with observe that
\[ \E\tr(\rho G) = \frac{\alpha}{d^3}\E\tr(G^2) = \alpha d\, , \]
while Theorem~\ref{th:w-vrad(S_k)} gives us
\[ \E\sup \left\{ \tr(\sigma G):\ \sigma\in\mathrm{SN}_k(\C^d\,{:}\,\C^d) \right\} \leq C\sqrt{k}\sqrt{d}\, . \]
With the two average estimates above at our disposal, we are now in position to apply the following Gaussian deviation inequality: If $f$ is a function such that, for any Gaussian variables $H,H'$, $|f(H)-f(H')|\leq L(H,H')\|H-H'\|_2$, with $\E L(H,H')\leq L$, then for any $\varepsilon>0$, $\P(|f-\E f|>\varepsilon)\leq e^{-c'\epsilon^2/L^2}$ (see e.g.~\cite[Chapter 2]{Pisier} for a proof). The Lipschitz constants of the two functions we are interested in were upper bounded in the proof of~\cite[Proposition V.6]{GHLS}. Using these estimates we thus get that, for any $\varepsilon>0$, on the one hand
\[ 
\P\big( \tr(\rho G) < (1-\varepsilon)\alpha d \big) \leq e^{-c\varepsilon^2d^4}\, , 
\]
and on the other hand
\[
\P\left( \sup\left\{ \tr(\sigma G):\ \sigma\in\mathrm{SN}_k(\C^d\,{:}\,\C^d) \right\} > (1+\varepsilon)C\sqrt{k}\sqrt{d} \right) \leq e^{-c\varepsilon^2d}\, . 
\]
Therefore, taking $\varepsilon=1/2$ in the two deviation probability estimates above, we get by the union bound that $M=\id-2G/\alpha d$ is such that
\[ \P \left( \tr(\rho M) < 0\ \text{and}\ \sup\left\{ \tr(\sigma M):\ \sigma\in\mathrm{SN}_k(\C^d\,{:}\,\C^d) \right\} > 1-\frac{3C}{\alpha}\frac{\sqrt{k}}{\sqrt{d}} \right) \geq 1- e^{-cd^4/4} - e^{-cd/4}\, . \]
The statement of the theorem follows from observing that $(3C/\alpha)\sqrt{k}/\sqrt{d}<1$ for $k<(\alpha^2/9C^2)d$ (and from relabelling $c/4$ in $c$).
\end{proof}

\begin{corollary} \label{cor:PPT-SR}
	Let $\rho$ be a random state on $\C^d\otimes\C^d$, as defined by equation~\eqref{eq:random-state}, with $1/8\leq\alpha\leq 3/8$. Then, for any $1\leq k\leq c'd$, we have
	\[ \P\left( \rho\in\PPTSet(\C^d\,{:}\,\C^d)\ \text{and}\ \rho\notin\mathrm{SN}_k(\C^d\,{:}\,\C^d) \right) \geq 1-4e^{-\hat{c}d}\, , \]
	where $\hat{c},c'>0$ are universal constants.
\end{corollary}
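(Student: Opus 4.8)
The plan is to deduce this corollary directly from Theorems~\ref{prop:PPT} and~\ref{prop:SR} by a union bound, the only genuine work being to argue that the $\alpha$-dependent constants appearing there become universal once $\alpha$ is confined to the compact interval $[1/8,3/8]\subset(0,1/2)$. First I would write the target event as the intersection of $\{\rho\in\PPTSet(\C^d\,{:}\,\C^d)\}$ and $\{\rho\notin\mathrm{SN}_k(\C^d\,{:}\,\C^d)\}$ and pass to complements, obtaining
\[
\P\left( \rho\in\PPTSet(\C^d\,{:}\,\C^d)\ \text{and}\ \rho\notin\mathrm{SN}_k(\C^d\,{:}\,\C^d) \right) \geq 1 - \P\left( \rho\notin\PPTSet(\C^d\,{:}\,\C^d) \right) - \P\left( \rho\in\mathrm{SN}_k(\C^d\,{:}\,\C^d) \right)\, .
\]
Theorem~\ref{prop:PPT} bounds the first failure probability by $2e^{-c_{\alpha}d^2}$, while Theorem~\ref{prop:SR} bounds the second by $2e^{-cd}$, the latter being valid as soon as $k\leq c'_{\alpha}d$.

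Next I would make these constants uniform in $\alpha$. Inspecting the proof of Theorem~\ref{prop:PPT}, the exponent is $c_{\alpha}=c(1/\alpha-2)^{3/2}$, a continuous and strictly positive function of $\alpha$ on $(0,1/2)$; restricted to the compact subinterval $[1/8,3/8]$ it is bounded below by a positive universal constant (its minimum, attained at $\alpha=3/8$). Likewise, the proof of Theorem~\ref{prop:SR} produces the threshold $c'_{\alpha}=\alpha^2/(9C^2)$, whose minimum over $[1/8,3/8]$ is $1/(576\,C^2)$; setting $c'\coloneqq 1/(576\,C^2)$, which is universal since $C$ is, ensures that the hypothesis $k\leq c'd$ forces $k\leq c'_{\alpha}d$ for every admissible $\alpha$, so that Theorem~\ref{prop:SR} indeed applies in the whole range.

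Finally I would collapse the two tail bounds into one. Since $d^2\geq d$ for all $d\geq 1$ we have $e^{-c_{\alpha}d^2}\leq e^{-c_{\alpha}d}$, so both failure probabilities decay at least like $e^{-\hat{c}d}$, where $\hat{c}$ is the smaller of the two universal rates just identified; this yields the advertised bound $1-4e^{-\hat{c}d}$, the factor $4=2+2$ arising from the two pairs of exponential terms. I do not expect any real obstacle here, as the argument is a routine union bound. The one point deserving explicit mention — and the reason the corollary fixes $\alpha\in[1/8,3/8]$ rather than leaving it free in $(0,1/2)$ — is precisely that confining $\alpha$ to an interval bounded away from both endpoints is what converts the $\alpha$-dependent rates of the two preceding theorems into the universal constants $\hat{c}$ and $c'$ claimed in the statement.
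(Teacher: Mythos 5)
Your proposal is correct and follows essentially the same route as the paper: the corollary is obtained by a union bound from the two preceding theorems, with the constants made universal by exploiting monotonicity of $c_{\alpha}$ (decreasing, so bounded below by $c_{3/8}$) and $c'_{\alpha}$ (increasing, so bounded below by $c'_{1/8}$) over $[1/8,3/8]$, and $\hat{c}=\min(c_{3/8},c)$ exactly as in the paper. Your explicit evaluations (e.g.\ $c'_{1/8}=1/(576\,C^2)$) and the observation $e^{-c_{\alpha}d^2}\leq e^{-c_{\alpha}d}$ are just a slightly more detailed writing of the same argument.
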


\begin{proof}
Corollary~\ref{cor:PPT-SR} is a direct consequence of Propositions~\ref{prop:PPT} and~\ref{prop:SR}. Indeed, since the constant $c'_{\alpha}$ of Proposition~\ref{prop:SR} is increasing with $\alpha$, one can choose $c'=c'_{1/8}$, and since the constant $c_{\alpha}$ of Proposition~\ref{prop:PPT} is decreasing with $\alpha$, one can choose $\hat{c}=\min(c_{3/8},c)$ (with the constant $c$ of Proposition~\ref{prop:SR}). 
\end{proof}

An explicit value for the universal constant $c'>0$ appearing in Corollary~\ref{cor:PPT-SR} could be estimated. It is nevertheless expected to be smaller than the constant $1/4$ obtained in our explicit construction. The main interest of Corollary~\ref{cor:PPT-SR} thus resides more in showing that, on large bipartite quantum systems, it is actually a typical feature of PPT states to have a Schmidt number which scales linearly with the local dimension.

\section{Upper bounds on the Schmidt number}
\label{sec:UpperBounds}
The previous sections show that quantum states can have high Schmidt number although they stay positive under partial transposition. We will now focus on the converse question: \emph{Can we infer limitations on the Schmidt number from positive partial transpose?} Our main results show such limitations for states invariant under partial transposition and for absolutely PPT states (precise definitions will be given later). These results are all based on a general technique relating the Schmidt number to entangled principal sub-blocks of a quantum state.

To state our results, we will need the notion of a trivial lifting of a linear map first considered in~\cite{Yang16}.

\begin{defn}[Trivial lifting~\cite{Yang16}]
 A linear map $\mathcal{L}:\M_{d_1}\ra \M_{d_2}$ is called an \emph{$\mathcal{S}$-trivial lifting} for a set $\mathcal{S}\subseteq \lset 1,\ldots ,d_1\rset$ iff $\mathcal{L}\lb\proj{i}{j}\rb=0$ whenever $i\in \mathcal{S}$ or $j\in \mathcal{S}$. 
\end{defn}

Note that in the above definition we do not care about the map that gets lifted to a particular $\mathcal{S}$-trivial lifting $\mathcal{L}:\M_{d_1}\ra \M_{d_2}$. This omission simplifies our presentation slightly, and we refer to \cite{Yang16} for a more elaborate presentation.  A remarkable decomposition technique for $k$-positive maps (called `Choi decomposition') has been proved in~\cite{Yang16}. We will need the following corollary of this result:

\begin{thm}[\cite{Yang16}]
For $k\in \lset 2,\ldots, \min(d_1,d_2)\rset$ any $k$-positive map $\mathcal{P}:\M_{d_1}\ra \M_{d_2}$ can be written as 
\[
\mathcal{P} = \mathcal{Q} + \mathcal{T}\, ,
\]
where $\mathcal{T}:\M_{d_1}\ra \M_{d_2}$ is completely positive and $\mathcal{Q}:\M_{d_1}\ra \M_{d_2}$ is a positive $\mathcal{S}$-trivial lifting for some $\mathcal{S}\subseteq\lset 1,\ldots ,d_1\rset$ with $\left|\mathcal{S}\right| = k-1$.
\label{thm:Decomp}
\end{thm}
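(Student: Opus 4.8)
The plan is to pass to Choi matrices and reduce the statement to a purely spectral fact about block matrices, which can then be handled by iterated Schur complements. Writing $C_\mathcal{P}=\sum_{i,j=1}^{d_1}\mathcal{P}(\ketbra{i}{j})\otimes\ketbra{i}{j}\in\M_{d_2}\otimes\M_{d_1}$ for the (unnormalized) Choi matrix, I would use the standard Choi--Jamio{\l}kowski dictionary: $\mathcal{T}$ is completely positive iff $C_\mathcal{T}\geq0$; $\mathcal{Q}$ is positive iff $C_\mathcal{Q}$ is \emph{block positive}, i.e.\ $\bra{\psi}C_\mathcal{Q}\ket{\psi}\geq0$ for every product vector $\ket{\psi}$; $\mathcal{P}$ is $k$-positive iff $C_\mathcal{P}$ is \emph{$k$-block positive}, i.e.\ $\bra{\psi}C_\mathcal{P}\ket{\psi}\geq0$ for every $\ket{\psi}\in\C^{d_2}\otimes\C^{d_1}$ with $\SR(\ket\psi)\leq k$; and $\mathcal{Q}$ is an $\mathcal{S}$-trivial lifting iff the block matrix $\big[\,\mathcal{Q}(\ketbra{i}{j})\,\big]_{i,j}$ has all blocks in a row or column indexed by $\mathcal{S}$ equal to zero, i.e.\ $C_\mathcal{Q}$ is supported on the principal $\mathcal{S}^c\times\mathcal{S}^c$ block. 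In this language the theorem reads: a $k$-block positive $C_\mathcal{P}$ splits as $C_\mathcal{T}+C_\mathcal{Q}$ with $C_\mathcal{T}\geq0$ and $C_\mathcal{Q}$ block positive and supported on a principal sub-block obtained by deleting $k-1$ of the $d_1$ block-indices. I would in fact fix \emph{any} set $\mathcal{S}$ of $k-1$ indices in advance and peel them off one at a time.

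The key step, and the crux of the whole argument, is the following reduction lemma: if $C$ is $k$-block positive for the $d_1$-fold block structure and $C_{11}=\mathcal{P}(\ketbra11)$ is its leading diagonal block (automatically positive semidefinite, being the value of a block positive matrix on a Schmidt-rank-one direction), then the Schur complement $\tilde C=\big[\,C_{ij}-C_{i1}C_{11}^{-1}C_{1j}\,\big]_{i,j\geq2}$, a matrix with the $(d_1-1)$-fold block structure, is $(k-1)$-block positive. To prove it I would take arbitrary $\ket{\phi_2},\dots,\ket{\phi_{d_1}}\in\C^{d_2}$ spanning a space of dimension at most $k-1$, set $\ket{\phi_1}:=-C_{11}^{-1}\sum_{j\geq2}C_{1j}\ket{\phi_j}$, and use the elementary Schur identity $\sum_{i,j\geq2}\bra{\phi_i}\tilde C_{ij}\ket{\phi_j}=\sum_{i,j\geq1}\bra{\phi_i}C_{ij}\ket{\phi_j}$ (the right side is minimized over the block-$1$ component exactly at this $\ket{\phi_1}$). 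The vector whose columns are $\ket{\phi_1},\dots,\ket{\phi_{d_1}}$ has Schmidt rank at most $k$, because adjoining the single vector $\ket{\phi_1}$ raises the dimension of the column span by at most one; hence the right side is nonnegative by $k$-block positivity of $C$, proving that $\tilde C$ is $(k-1)$-block positive. The decisive point is that eliminating exactly one block-index lowers the block-positivity level by exactly one.

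With the lemma in hand the remainder is bookkeeping. Set $C^{(0)}:=C_\mathcal{P}$ and peel the $k-1$ indices of $\mathcal{S}$ one after another. At the $r$-th step the current matrix $C^{(r-1)}$ is $(k-r+1)$-block positive (hence at least $2$-block positive), so the diagonal block to be eliminated is positive semidefinite and one can write $C^{(r-1)}=P_r+\widehat{C}^{(r)}$, where $P_r\geq0$ is the positive semidefinite Schur completion of that diagonal block (supported on the not-yet-peeled blocks) and $\widehat{C}^{(r)}$ is the Schur complement $C^{(r)}$ placed on the remaining blocks with zeros in the eliminated one; by the lemma $C^{(r)}$ is $(k-r)$-block positive. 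Telescoping these identities and embedding each $P_r$ by zeros on the previously peeled blocks yields $C_\mathcal{P}=C_\mathcal{T}+C_\mathcal{Q}$ with $C_\mathcal{T}:=\sum_{r=1}^{k-1}P_r\geq0$ and $C_\mathcal{Q}:=C^{(k-1)}$ supported on the $\mathcal{S}^c$-blocks. Since $C^{(k-1)}$ is $1$-block positive, reading the decomposition back through the dictionary produces a completely positive $\mathcal{T}$ and a positive $\mathcal{S}$-trivial lifting $\mathcal{Q}$ with $\left|\mathcal{S}\right|=k-1$, as required. Note that $\mathcal{Q}$ is positive \emph{by construction} (it is the $1$-block positive remainder), so one never needs to argue that block positivity survives a subtraction.

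The genuine content is thus concentrated entirely in the reduction lemma, namely the exact matching between ``delete one index'' and ``drop one level of block positivity''. The only point that requires care is a degenerate technicality rather than a real obstacle: when a diagonal block $C_{11}$ is singular the Schur complement must be formed with a pseudoinverse, which I would justify either by verifying the usual range condition (the off-diagonal blocks map into the range of $C_{11}$, valid for block positive matrices) or, more cheaply, by replacing $C_{11}$ with $C_{11}+\varepsilon\one$, running the argument verbatim, and letting $\varepsilon\to0$, using that $k$-block positivity is a closed condition.
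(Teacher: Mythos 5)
The paper never proves this theorem: it is imported wholesale from Yang, Leung and Tang (reference [Yang16]), where it arises as a corollary of their ``Choi decomposition'', so there is no in-paper argument to compare yours against. Your proposal is therefore a genuine, self-contained proof, and it is correct in all essential respects. The route --- translate $k$-positivity of $\mathcal{P}$ into $k$-block positivity of the Choi matrix, prove a reduction lemma saying that taking the Schur complement with respect to one diagonal block lowers the block-positivity level by exactly one, then peel off $k-1$ indices, collecting the positive semidefinite Schur completions into $C_\mathcal{T}$ and leaving a $1$-block positive remainder supported on the $\mathcal{S}^c$ blocks as $C_\mathcal{Q}$ --- is exactly the kind of argument behind the cited result. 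Your proof of the reduction lemma is right: with $\ket{\phi_1}\coloneqq -C_{11}^{-1}\sum_{j\geq2}C_{1j}\ket{\phi_j}$ the Schur identity holds by direct computation, the assembled vector has Schmidt rank at most $(k-1)+1=k$, and the bookkeeping in the iteration (including the observation that positivity of $\mathcal{Q}$ is automatic, being $1$-block positivity of the final remainder) is sound. You even obtain a slightly stronger conclusion than the statement asks for, namely that the set $\mathcal{S}$ of $k-1$ deleted indices can be prescribed arbitrarily in advance rather than merely shown to exist.

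One point needs repair, though it is confined to the degenerate case you yourself flag. The justification you offer for the pseudoinverse Schur complement --- that ``the off-diagonal blocks map into the range of $C_{11}$, valid for block positive matrices'' --- is false at that level of generality: for merely ($1$-)block positive matrices the range inclusion can fail, e.g.
\[
\begin{pmatrix} \proj{0}{0} & \ketbra{1}{0} \\ \ketbra{0}{1} & \one \end{pmatrix}
\]
is block positive while $\mathrm{range}\bigl(\ketbra{1}{0}\bigr)\not\subseteq\mathrm{range}\bigl(\proj{0}{0}\bigr)$. What is true, and what you actually need, is the inclusion for matrices that are at least $2$-block positive: any vector supported on two block indices automatically has Schmidt rank at most $2$, so every $2\times 2$ block principal submatrix $\begin{pmatrix} C_{11} & C_{1j}\\ C_{j1} & C_{jj}\end{pmatrix}$ of a $2$-block positive matrix is genuinely positive semidefinite, and positive semidefinite block matrices do satisfy $\mathrm{range}(C_{1j})\subseteq\mathrm{range}(C_{11})$. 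Since in your iteration every matrix that gets peeled is $(k-r+1)$-block positive with $r\leq k-1$, hence at least $2$-block positive, this observation fixes the argument with no structural change; the same $2\times2$ positivity also supplies the boundedness needed to pass to the limit in your $\varepsilon$-regularization alternative, which otherwise does not converge ``verbatim''. With this correction the proposal is complete.
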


The main consequence of the previous theorem is a dimension reduction for $k$-positive maps $\mathcal{P}:\M_{d_1}\ra \M_{d_2}$. Indeed such maps exhibit a behaviour that is not completely positive only on a subspace of their input space $\M_{d_1}$. We will relate this in the following to the Schmidt number.

\subsection{Schmidt numbers imply entanglement in principal sub-blocks}

The following theorem is the main technical tool of this section.

\begin{thm}[Entangled sub-blocks from Schmidt number]
For $d_1\leq d_2$ consider a matrix $X\in (\M_{d_1}\otimes \M_{d_2})^+$ written as  
\[
X = \sum^{d_1}_{i,j=1} \proj{i}{j}\otimes X_{ij}\, ,
\]
with blocks $X_{ij}\in\M_{d_2}$ for $i,j\in\lset 1,\ldots ,d_1\rset$. If $2\leq \SN\lb X\rb = k$, then there exists a set \[
\lset m_1,\ldots , m_{d_1-k+2}\rset\subseteq \lset 1,\ldots , d_1\rset\]
such that the principal sub-block matrix
\[
Y = \sum^{d_1 -k+2}_{s,t=1} \proj{s}{t}\otimes X_{m_sm_t}\in (\M_{d_1-k+2}\otimes \M_{d_2})^+
\]
is entangled.

\end{thm}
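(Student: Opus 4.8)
The plan is to exploit the duality between Schmidt number and cones of $n$-positive maps together with the Choi decomposition of Theorem~\ref{thm:Decomp}, turning the non-positivity that witnesses $\SN(X)=k$ into an ordinary entanglement witness supported on a principal sub-block. I first dispose of the case $k=2$: here $d_1-k+2=d_1$, the claimed sub-block is $X$ itself, and $X$ is entangled because $\SN(X)=2>1$; so one takes $\{m_1,\dots,m_{d_1}\}=\{1,\dots,d_1\}$. Assume henceforth $k\ge 3$.

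Because $\SN$ is invariant under swapping the two tensor factors (Schmidt rank is symmetric), Terhal's duality may be run on the $A$-system: $\SN(X)\le n$ iff $(\mathcal P_A\otimes\ident_B)(X)\ge 0$ for every $n$-positive $\mathcal P:\M_{d_1}\to\M_{d_2}$. Since $\SN(X)=k>k-1$, there exists a $(k-1)$-positive map $\mathcal P:\M_{d_1}\to\M_{d_2}$ with $(\mathcal P_A\otimes\ident_B)(X)\ngeq 0$. As $2\le k-1\le d_1-1\le\min(d_1,d_2)$, Theorem~\ref{thm:Decomp} applies and writes $\mathcal P=\mathcal Q+\mathcal T$ with $\mathcal T$ completely positive and $\mathcal Q$ a positive $\mathcal S$-trivial lifting for some $\mathcal S\subseteq\{1,\dots,d_1\}$ with $|\mathcal S|=(k-1)-1=k-2$. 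Complete positivity of $\mathcal T$ gives $(\mathcal T_A\otimes\ident_B)(X)\ge 0$, so the non-positivity is carried by $\mathcal Q$, i.e.\ $(\mathcal Q_A\otimes\ident_B)(X)\ngeq 0$.

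Now set $M:=\{1,\dots,d_1\}\setminus\mathcal S=\{m_1,\dots,m_{d_1-k+2}\}$, of cardinality $d_1-(k-2)=d_1-k+2$. Since $\mathcal Q(\proj i j)=0$ whenever $i$ or $j$ lies in $\mathcal S$, we have $(\mathcal Q_A\otimes\ident_B)(X)=\sum_{s,t}\mathcal Q(\proj{m_s}{m_t})\otimes X_{m_sm_t}$, which involves only the blocks indexed by $M$. With $Y$ the principal sub-block from the statement and the compressed map $\mathcal Q':\M_{d_1-k+2}\to\M_{d_2}$ defined by $\mathcal Q'(\proj s t):=\mathcal Q(\proj{m_s}{m_t})$, one checks $(\mathcal Q'_A\otimes\ident_B)(Y)=(\mathcal Q_A\otimes\ident_B)(X)\ngeq 0$. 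The map $\mathcal Q'$ is positive (embed a positive matrix of $\M_{d_1-k+2}$ into the $M$-block of $\M_{d_1}$ and apply the positive map $\mathcal Q$), and $Y\ge 0$ as a principal submatrix of $X\ge 0$; hence by the Horodecki positive-map criterion for separability, $Y$ is entangled.

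The main point — and the step most easily gotten wrong — is to run the duality on the $A$-system rather than on $B$, so that the index set $\mathcal S$ produced by the Choi decomposition lives in $\{1,\dots,d_1\}$ and literally selects which principal sub-block (which $A$-rows and columns) to retain; applying the map on $B$ would instead shrink the $d_2$-dimensional factor and would not prove the statement. The remaining care points are bookkeeping: tracking cardinalities to land on exactly $d_1-k+2$, and checking that the restriction of a positive map to a principal sub-block of its domain remains positive, so that the compressed witness $\mathcal Q'$ still certifies entanglement of $Y$.
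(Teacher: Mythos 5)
Your proof is correct and follows essentially the same route as the paper's: a $(k-1)$-positive witness acting on the $A$-system, the Choi decomposition of Theorem~\ref{thm:Decomp} to split off a positive $\mathcal{S}$-trivial lifting $\mathcal{Q}$ with $|\mathcal{S}|=k-2$, and the observation that the non-positivity of $(\mathcal{Q}\otimes\ident)(X)$ is supported entirely on the principal sub-block indexed by the complement of $\mathcal{S}$, which the positive-map separability criterion then certifies as entangled. The only (minor) differences are that you treat $k=2$ separately --- which in fact tidies up an edge case, since for $k=2$ the paper invokes Theorem~\ref{thm:Decomp} for a $1$-positive map, formally outside its stated range $k\geq 2$ --- and that you compress the witness $\mathcal{Q}$ down to $\M_{d_1-k+2}$, whereas the paper equivalently embeds $Y$ back into $\M_{d_1}\otimes\M_{d_2}$ via a local isometry.
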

\begin{proof}
Since $\SN\lb X\rb = k$ there exists a $(k-1)$-positive map $\mathcal{P}:\M_{d_1}\ra \M_{d_2}$ such that 
\begin{equation}
(\mathcal{P}\otimes \mathrm{id}_{d_2})\lb X\rb\ngeq 0\, .
\label{equ:wit}
\end{equation}
By Theorem~\ref{thm:Decomp} there exists a set $\mathcal{S}\subset \lset 1,\ldots ,d_1\rset$ with $\left|\mathcal{S}\right| = k-2$ such that
\[
\mathcal{P} = \mathcal{Q} + \mathcal{T}\, ,
\]
where $\mathcal{T}:\M_{d_1}\ra \M_{d_2}$ is completely positive and $\mathcal{Q}:\M_{d_1}\ra \M_{d_2}$ is a positive $\mathcal{S}$-trivial lifting. Setting 
\[
\lset m_1,\ldots , m_{d_1-k+2}\rset = \lset 1,\ldots , d_1\rset\setminus \mathcal{S}\, ,
\]
we can conclude from \eqref{equ:wit} that 
\[
(\mathcal{Q}\otimes \mathrm{id}_{d_2})\lb X\rb = \sum^{d_1}_{i,j=1} \mathcal{Q}(\proj{i}{j})\otimes X_{ij} = \sum^{d_1-k+2}_{s,t=1} \mathcal{Q}(\proj{m_s}{m_t})\otimes X_{m_sm_t} \ngeq 0\, , 
\]
where we used that $\mathcal{Q}(\proj{i}{j})=0$ whenever $i\in \mathcal{S}$ or $j\in \mathcal{S}$. Since the map $\mathcal{Q}$ is positive we have that the positive matrix
\[
\sum^{d_1-k+2}_{s,t=1} \proj{m_s}{m_t}\otimes X_{m_tm_s} = (V\otimes \one_{d_2})Y (V^\dagger\otimes \one_{d_2})
\]
is entangled. Here $V:\C^{d_1-k+2}\ra \C^{d_1}$ is the isometry defined by $V\ket{s} = \ket{m_s}$ for $s\in \lset 1,\ldots ,d_1-k+2\rset$, and since the application of a local isometry preserves separability the proof is finished.

\end{proof}

For convenience we restate as a corollary the special case where the previous theorem is applied to a positive matrix with maximal Schmidt number.

\begin{cor}
For $d_1\leq d_2$ consider a matrix $X\in (\M_{d_1}\otimes \M_{d_2})^+$ written as  
\[
X = \sum^{d_1}_{i,j=1} \proj{i}{j}\otimes X_{ij}\, ,
\]
with blocks $X_{ij}\in\M_{d_2}$ for $i,j\in\lset 1,\ldots ,d_1\rset$. If $\SN\lb X\rb = d_1$, then there exists $\lset k_1,k_2\rset\subset \lset 1,\ldots ,d_1\rset$ such that the principal sub-block matrix
\begin{equation}
Y = \sum^2_{s,t=1} \proj{s}{t}\otimes X_{k_sk_t}\in (\M_{2}\otimes \M_{d_2})^+
\label{equ:subblock}
\end{equation}
is entangled.

\label{cor:RedMaxSch}
\end{cor}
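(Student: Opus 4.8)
The plan is to obtain this corollary as the single most extreme instance of the preceding theorem (Entangled sub-blocks from Schmidt number), namely the case $k=d_1$. First I would observe that the statement is only meaningful when $d_1\geq 2$: for $d_1=1$ every block matrix $X\in(\M_1\otimes\M_{d_2})^+$ is trivially separable, $\SN(X)=1=d_1$, and there is no two-element subset of $\lset 1\rset$ to speak of. Hence, under the hypothesis $\SN(X)=d_1$ we automatically have $2\leq \SN(X)=d_1$, which is exactly the condition $2\leq \SN(X)=k$ required to invoke the theorem with the choice $k=d_1$.

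Next I would substitute $k=d_1$ into the conclusion of the theorem. The cardinality of the index set it guarantees is $d_1-k+2$, which for $k=d_1$ collapses to $d_1-d_1+2=2$. Thus the theorem produces a two-element set $\lset m_1,m_2\rset\subseteq\lset 1,\ldots,d_1\rset$ together with the assertion that the principal sub-block $\sum_{s,t=1}^{2}\proj{s}{t}\otimes X_{m_sm_t}$ is entangled.

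Finally I would relabel $\lset m_1,m_2\rset$ as $\lset k_1,k_2\rset$ and simply read off that the matrix $Y$ displayed in \eqref{equ:subblock} is precisely this sub-block, hence entangled, which completes the proof.

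As for the main obstacle: there is essentially none, since all of the substantive work has already been carried out in the preceding theorem and the corollary is a pure specialization. The only points requiring a moment's care are the arithmetic of the index-set cardinality $d_1-k+2$ and the tacit standing assumption $d_1\geq 2$, which is what simultaneously legitimizes the hypothesis $\SN(X)\geq 2$ and guarantees the existence of a genuine two-element subset $\lset k_1,k_2\rset$.
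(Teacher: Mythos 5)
Your proposal is correct and matches the paper exactly: the paper states this corollary without a separate proof, treating it precisely as the specialization $k=d_1$ of the preceding theorem, in which case the index-set cardinality $d_1-k+2$ collapses to $2$. Your additional remark that the hypothesis $\SN(X)=d_1\geq 2$ is what legitimizes invoking the theorem is a sensible (if tacit in the paper) point of care.
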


It is a simple consequence of the previous corollary that a state $\rho$ on $\C^{d_1}\otimes \C^{d_2}$ with $d_1\leq d_2$ has Schmidt number $\text{SN}(\rho)\leq d_1-1$ (i.e.~the Schmidt number is not maximal) if none of the principal sub-block matrices from \eqref{equ:subblock} is entangled. We will use this technique in the following to show that the Schmidt number of certain subsets of PPT states is not maximal.

\subsection{States which are invariant under partial transpose}

The following theorem shows that a state which is PT-invariant with respect to the smaller of the two subsystems cannot have maximal Schmidt number. In the case where the smaller system has dimension $2$ this had been shown in~\cite{2xNseparability}.   

\begin{thm}[Schmidt number of PT-invariant states]
If a quantum state $\rho$ on $\C^{d_1}\otimes\C^{d_2}$ with $2\leq d_1\leq d_2$ satisfies $(\trans_{d_1}\otimes \mathrm{id}_{d_2})(\rho) = \rho$, then $\SN\lb \rho\rb\leq d_1 -1$. 

\end{thm}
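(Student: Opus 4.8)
The plan is to use Corollary \ref{cor:RedMaxSch} as the central tool. Suppose for contradiction that $\rho$ on $\C^{d_1}\otimes\C^{d_2}$ with $2\leq d_1\leq d_2$ has maximal Schmidt number $\SN(\rho)=d_1$. Writing $\rho$ in block form with respect to the smaller ($A$) system as $\rho = \sum_{i,j=1}^{d_1}\proj{i}{j}\otimes \rho_{ij}$ (so each $\rho_{ij}\in\M_{d_2}$), the corollary guarantees a pair of indices $\lset k_1,k_2\rset$ such that the $2\times d_2$ principal sub-block
\[
Y = \sum_{s,t=1}^2 \proj{s}{t}\otimes \rho_{k_sk_t}\in(\M_2\otimes\M_{d_2})^+
\]
is \emph{entangled}.

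The key observation I would exploit is that the PT-invariance hypothesis $(\trans_{d_1}\otimes\ident_{d_2})(\rho)=\rho$ passes down to every such principal sub-block. First I would translate the global symmetry into a statement about the blocks: applying $\trans_{d_1}$ on the $A$ system sends $\proj{i}{j}\mapsto\proj{j}{i}$ but leaves each $\rho_{ij}$ untouched, so the identity $(\trans_{d_1}\otimes\ident_{d_2})(\rho)=\rho$ is equivalent to $\rho_{ij}=\rho_{ji}$ for all $i,j$. Restricting to the indices $k_1,k_2$, this gives $\rho_{k_1k_2}=\rho_{k_2k_1}$, which is exactly the condition that $Y$ itself is invariant under partial transposition on its $\M_2$ factor, i.e.\ $(\trans_2\otimes\ident_{d_2})(Y)=Y$.

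The final step reduces the problem to the known $d_1=2$ case. The sub-block $Y$ lives on $\C^2\otimes\C^{d_2}$, is positive, and is PT-invariant on the qubit factor; in particular PT-invariance forces $Y^{\Gamma}=Y\geq 0$, so $Y$ is a (multiple of a) PPT state on $\C^2\otimes\C^{d_2}$. By the result of \cite{2xNseparability} (equivalently, by the fact that PPT on a $2\times N$ system coincides with separability, as recorded in the preliminaries via \cite{Horodecki96}), every PPT state on $\C^2\otimes\C^{d_2}$ is separable. Hence $Y$ is separable, contradicting the entanglement of $Y$ obtained from the corollary. Therefore the assumption $\SN(\rho)=d_1$ is untenable and $\SN(\rho)\leq d_1-1$.

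The \textbf{main obstacle} I anticipate is the clean verification that PT-invariance is inherited by the principal sub-blocks—specifically, confirming that $\trans_{d_1}$ acts on the block structure as a mere relabelling $\rho_{ij}\leftrightarrow\rho_{ji}$ rather than entangling the transpose with the $B$ system, and that the restriction to the two chosen indices genuinely yields $\trans_2$-invariance of $Y$. Once that is pinned down, the separability of $2\times N$ PPT states does all the remaining work, so the crux is entirely in correctly propagating the symmetry down to the sub-block. One caveat to double-check is normalization: the corollary and the $2\times N$ result are stated for (unnormalized) positive operators and states respectively, but since separability and entanglement are scale-invariant, this poses no genuine difficulty.
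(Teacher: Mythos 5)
Your overall strategy coincides with the paper's: write $\rho$ in block form, observe that PT-invariance forces $\rho_{ij}=\rho_{ji}$ so that every $2\times d_2$ principal sub-block is invariant under partial transposition on its qubit factor, and then play this off against Corollary~\ref{cor:RedMaxSch}. However, your final step contains a genuine error. You deduce separability of $Y$ via the chain ``PT-invariant $\Rightarrow$ PPT $\Rightarrow$ separable'', justifying the second implication by the claim that PPT coincides with separability on $\C^2\otimes\C^{d_2}$. That claim is false whenever $d_2\geq 4$: the equivalence between PPT and separability recorded in the preliminaries (via \cite{Horodecki96,Woronowicz}) holds only for $d_Ad_B\leq 6$, i.e.\ for $2\times 2$ and $2\times 3$ systems, and the paper itself points out the obstruction in its conclusion --- the Tang--Horodecki state on $\C^2\otimes\C^4$ is PPT yet entangled. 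So, as written, your argument only establishes the theorem for $d_2\leq 3$.

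The gap is fixable precisely because you have already derived the stronger property that matters: $Y$ satisfies $(\trans_2\otimes\ident_{d_2})(Y)=Y$, which is strictly more than being PPT. The result you need, and the one the paper actually invokes, is \cite[Theorem 2]{2xNseparability}: a positive operator on $\C^2\otimes\C^{N}$ that is invariant under partial transposition on the qubit factor is separable, for \emph{every} $N$. This theorem exploits the full symmetry hypothesis, and your parenthetical ``equivalently'' conflates it with the much weaker (and false, for $N\geq 4$) statement that PPT implies separability on $2\times N$ systems. If you replace your second implication by a direct appeal to that theorem, keeping the rest of your argument untouched, you recover exactly the paper's proof.
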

\begin{proof}
We write $\rho\in(\M_{d_1}\otimes \M_{d_2})^+$ as 
\[
\rho = \sum^{d_1}_{i,j=1} \proj{i}{j}\otimes X_{ij}
\]
with matrices $X_{ij}\in\M_{d_2}$ for $i,j\in\lset 1,\ldots ,d_1\rset$. Since $\rho$ is invariant under a partial transpose we have that
\[
\rho = \sum^{d_1}_{i,j=1} \proj{i}{j}\otimes X_{ij} = \sum^{d_1}_{i,j=1} \proj{i}{j}\otimes X_{ji} = (\trans_{d_1}\otimes \mathrm{id}_{d_2})(\rho),
\]
and thus $X_{ij} = X_{ji}$ for any $i,j\in\lset 1,\ldots ,d_1 \rset$. For any $k_1,k_2\in\lset 1,\ldots ,d_1\rset$ the matrix 
\[
\rho^{k_1 k_2} = \sum^2_{s,t=1} \proj{s}{t}\otimes X_{k_sk_t}
\] 
is positive as a principal sub-block matrix of the positive matrix $\rho$. 

Clearly, we have  
\[
(\trans_{d_1}\otimes \mathrm{id}_{d_2})(\rho^{k_1 k_2}) = \sum^2_{s,t=1} \proj{k_s}{k_t}\otimes X_{k_tk_s} = \sum^2_{s,t=1} \proj{k_s}{k_t}\otimes X_{k_sk_t} = \rho^{k_1 k_2}\, .
\]
Now \cite[Theorem 2]{2xNseparability} implies that $\rho^{k_1 k_2}\in(\M_2\otimes \M_{d_2})^+$ has to be separable (for a simple proof of this fact, see~\cite[\S 6]{revisited}). Therefore, $\SN(\rho) \leq d_1-1$ since otherwise Corollary~\ref{cor:RedMaxSch} would provide $k_1,k_2\in\lset 1,\ldots ,d_1\rset$ such that $\rho^{k_1 k_2}$ is entangled.  
\end{proof}

\subsection{States which are absolutely positive under partial transpose}

We now give an application of our techniques to the problem of classifying the absolutely PPT set. Let $\mathcal{U}(\C^d)\subset \M_d$ denote the set of unitary matrices acting on $\C^d$. A quantum state $\rho$ on $\C^{d_1}\otimes\C^{d_2}$ is called 
\begin{itemize}
    \item \emph{absolutely PPT} (APPT) if $U\rho U^\dagger$ is PPT for any unitary $U\in \mathcal{U}(\C^{d_1}\otimes\C^{d_2})$.
    \item \emph{absolutely separable} (ASEP) if $U\rho U^\dagger$ is separable for any unitary $U\in \mathcal{U}(\C^{d_1}\otimes\C^{d_2})$.
\end{itemize}
It has been shown in \cite{Howard} that there exists a ball of ASEP states around the maximally mixed state. Despite this not much is known about the structure of the set of ASEP states. In contrast the set of APPT states can be characterized in terms of a semidefinite program \cite{JohnstonSDP}. It is an open problem whether the sets of APPT states and ASEP states coincide for all dimensions $d_1$ and $d_2$ \cite{JohnstonSDP}. Here we prove that, at least, APPT states cannot have maximal Schmidt number. 

\begin{thm}[Schmidt number of APPT states]
Any APPT state $\rho$ on $\C^{d_1}\otimes\C^{d_2}$ satisfies
\[
\mathrm{SN}(\rho)\leq \min(d_1,d_2)-1\, .
\]

\end{thm}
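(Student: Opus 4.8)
The plan is to argue by contradiction, using the dimension reduction of Corollary~\ref{cor:RedMaxSch} together with the fact that a qubit-qudit sub-block of an APPT state is again APPT. Assume without loss of generality that $d_1\leq d_2$, so that $\min(d_1,d_2)=d_1$; this is harmless, since both the APPT property and the Schmidt number are invariant under swapping the two tensor factors. Suppose, for contradiction, that $\SN(\rho)=d_1$ is maximal. Writing $\rho=\sum_{i,j}\proj{i}{j}\otimes X_{ij}$ with blocks $X_{ij}\in\M_{d_2}$, Corollary~\ref{cor:RedMaxSch} then produces indices $\lset k_1,k_2\rset\subseteq\lset 1,\ldots,d_1\rset$ for which the principal sub-block $Y=\sum_{s,t=1}^{2}\proj{s}{t}\otimes X_{k_sk_t}\in(\M_2\otimes\M_{d_2})^+$ is entangled. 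The goal is to reach a contradiction by proving that $Y$ is in fact separable.

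The central step is to show that this qubit-qudit sub-block inherits absolute positivity of the partial transpose from $\rho$. Let $\mathcal{K}=\mathrm{span}\lset\ket{k_1},\ket{k_2}\rset\otimes\C^{d_2}\subseteq\C^{d_1}\otimes\C^{d_2}$ and let $Q=P\otimes\one_{d_2}$ be the orthogonal projection onto $\mathcal{K}$, where $P$ projects onto $\mathrm{span}\lset\ket{k_1},\ket{k_2}\rset$ in $\C^{d_1}$. Under the natural isometric identification $\mathcal{K}\cong\C^2\otimes\C^{d_2}$, the sub-block $Y$ is exactly the restriction of $Q\rho Q$ to $\mathcal{K}$. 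Now fix any unitary $W$ on $\C^2\otimes\C^{d_2}$ and extend it to $U=W\oplus\one_{\mathcal{K}^\perp}$, a genuine unitary on $\C^{d_1}\otimes\C^{d_2}$. Since $U$ is block-diagonal with respect to $\mathcal{K}\oplus\mathcal{K}^\perp$, it commutes with $Q$, so $WYW^\dagger$ is the restriction of $Q(U\rho U^\dagger)Q$ to $\mathcal{K}$. The key observation is that the partial transpose $\Gamma_B$ on the $\C^{d_2}$ factor commutes with compression by $Q=P\otimes\one_{d_2}$, because $P$ acts only on the first tensor factor; hence $(QMQ)^{\Gamma_B}=QM^{\Gamma_B}Q$ for every $M$. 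Consequently $(WYW^\dagger)^{\Gamma_B}$ is the restriction of $Q(U\rho U^\dagger)^{\Gamma_B}Q$ to $\mathcal{K}$. Because $\rho$ is APPT we have $(U\rho U^\dagger)^{\Gamma_B}\geq 0$, and a compression of a positive operator is positive, so $(WYW^\dagger)^{\Gamma_B}\geq 0$. As $W$ was arbitrary, $Y$ is absolutely PPT as a $2\times d_2$ state.

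To conclude, I would invoke the known fact that on qubit-qudit systems the absolutely PPT states coincide with the absolutely separable states~\cite{JohnstonSDP}. This forces $Y$ to be absolutely separable, hence in particular separable, contradicting the entanglement of $Y$ guaranteed by Corollary~\ref{cor:RedMaxSch}. Therefore the assumption $\SN(\rho)=d_1$ is untenable, and we obtain $\SN(\rho)\leq d_1-1=\min(d_1,d_2)-1$.

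I expect the main obstacle to be the second paragraph: establishing cleanly that a qubit-qudit principal sub-block of an APPT state is again APPT. The two ingredients that make it work---that an arbitrary unitary on the sub-block lifts to a subspace-preserving unitary on the full system, and that the relevant partial transpose commutes with compression by $P\otimes\one_{d_2}$---are elementary once isolated, but getting the identification $\mathcal{K}\cong\C^2\otimes\C^{d_2}$ and the correct partial-transposed subsystem right is where care is needed. The final reduction to separability rests entirely on the external qubit-qudit characterization, so the argument is only as self-contained as that citation.
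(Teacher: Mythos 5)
Your proposal is correct and follows essentially the same route as the paper's proof: the same dimension reduction via Corollary~\ref{cor:RedMaxSch}, the same lifting of an arbitrary unitary on $\C^2\otimes\C^{d_2}$ to a block-diagonal unitary on the full space (your $W\oplus\one_{\mathcal{K}^\perp}$ is exactly the paper's $\sum_{s,t}\proj{k_s}{k_t}\otimes V_{st}+\sum_{i\notin\{k_1,k_2\}}\proj{i}{i}\otimes\one_{d_2}$), and the same final appeal to the qubit-qudit equivalence of absolute PPT and absolute separability. The only cosmetic differences are that you phrase the argument by contradiction where the paper proves all $2\times d_2$ sub-blocks separable directly, and that the paper cites Johnston's separability-from-spectrum result~\cite{JohnstonAPPT2} rather than~\cite{JohnstonSDP} for the concluding step.
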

\begin{proof}
Without loss of generality we assume that $d_1\leq d_2$. We can write 
\[
\rho = \sum^{d_1}_{i,j=1}\proj{i}{j}\otimes X_{ij},
\]
with matrices $X_{ij}\in\M_{d_2}$ for $i,j\in\lset 1,\ldots ,d_1\rset$. For any $k_1, k_2\in \lset 1,\ldots , d_1\rset$ with $k_1\neq k_2$ the matrix
\[
\rho^{k_1k_2} = \sum^2_{s,t=1} \proj{s}{t}\otimes X_{k_sk_t} 
\]
is positive as a principal sub-block matrix of the positive matrix $\rho$. We will show that $\rho^{k_1k_2}$ is separable for any $k_1,k_2\in \lset 1,\ldots , d_1\rset$. Once this is done, the statement of our theorem follows from Corollary \ref{cor:RedMaxSch}. 

To show that $\rho^{k_1k_2}$ for fixed $k_1,k_2\in \lset 1,\ldots , d_1\rset$ is separable consider a unitary $\tilde{U}\in\mathcal{U}(\C^2\otimes \C^{d_2})$. We can write
\[
\tilde{U} = \sum^2_{s,t=1}\proj{s}{t}\otimes V_{st}\, ,
\]
where $V_{st}\in \M_{d_2}$ for $s,t\in\lset 1,2\rset$. Now consider
\[
U = \sum^2_{s,t=1} \proj{k_s}{k_t}\otimes V_{s t} + \sum_{ i\in \lset 1,\ldots ,d_1\rset\setminus\lset k_1 ,k_2\rset} \proj{i}{i}\otimes \one_{d_2}\in \mathcal{U}(\C^{d_1}\otimes \C^{d_2})\, .
\]
Since $\rho$ is APPT we have that 
\[
U\rho U^\dagger = \sum^2_{s,t=1}\proj{k_s}{k_t}\otimes \sum_{lm} V_{sl}X_{k_s k_l}V^\dagger_{lt} + \sum_{ i,j\in \lset 1,\ldots ,d_1\rset\setminus\lset k_1 ,k_2\rset} \proj{i}{j}\otimes \rho_{ij}
\]
is PPT. This implies that the principal sub-block matrix 
\[
\sum^2_{s,t=1}\proj{k_s}{k_t}\otimes \sum_{lm} V_{sl}X_{k_s k_l}V^\dagger_{lt} = (W\otimes \one_{d_2})\tilde{U}\rho^{k_1k_2}\tilde{U}^\dagger (W\otimes \one_{d_2})^\dagger
\]
is PPT as well. Here $W:\C^2\ra \C^{d_1}$ denotes the isometry defined by $W\ket{s} = \ket{k_s}$ for $s=1,2$. Finally, since partial applications of isometries preserve positivity we have that $\tilde{U}\rho^{k_1k_2}\tilde{U}^\dagger$ is PPT. Since the unitary $\tilde{U}\in\mathcal{U}(\C^2\otimes \C^{d_2})$ was chosen arbitrarily we find that $\rho^{k_1k_2}$ is an (unnormalized) APPT state. By Theorem \cite[Theorem 1]{JohnstonAPPT2} this implies that $\rho^{k_1k_2}$ is ASEP and therefore in particular separable. This finishes the proof. 
\end{proof}

\section{Conclusion and outlook}
We have constructed explicit examples of PPT states on $\C^d\otimes \C^d$ with Schmidt numbers at least $\ceil{(d-1)/4}$. By modifying our construction slightly we even obtained states on $\C^d\otimes \C^d$ invariant under partial transpose with Schmidt numbers at least $\ceil{(d-3)/8}$. We continued by studying a family of random PPT states also achieving such a linear scaling in the local dimension, although without being able to specify the exact constants. Finally, we introduced a technique to derive upper bounds on the Schmidt number of general quantum states. Using this technique we show that for both PT-invariant and APPT states the Schmidt number cannot reach its maximal value. There are a number of open problems, and directions for further research to be explored from here:
\begin{itemize}
    \item What is the maximal Schmidt number of a PPT state (or a PT-invariant state) on $\C^{d_1}\otimes \C^{d_2}$ for general $d_1$ and $d_2$? We only know the answer when $\min(d_1,d_2)=2$ or when $d_1=d_2=3$. It is well-known that there exist entangled PPT states on $\C^{2}\otimes \C^{d}$ for any $d\geq 4$, a concrete example being the Tang-Horodecki state~\cite{Tang,HorodeckiState} (which is even PT-invariant~\cite{2xNseparability}) for $d=4$. For $d_1=d_2=3$ it has been shown~\cite{Yang16} that the Schmidt number of a PPT state can only reach $2$.   
    \item What can be said about the \emph{absolute Schmidt number} (i.e~\emph{Schmidt number from spectrum})? Also, can one estimate the size of bounded Schmidt number balls around the identity?
    \item What are the implications of positive partial transposition constraints on the Schmidt number in the multipartite setting? There exist states on $\C^3\otimes \C^3\otimes \C^3$ that are genuine multipartite entangled despite being PPT across every bipartition~\cite{GHLS}, but similar examples for higher Schmidt numbers are not known. A possible starting point could be the following questions: \emph{Is there a PPT state on a $\C^3\otimes \C^3\otimes \C^3$ with Schmidt number $3$ across every bipartition? Or what is the maximum genuine multipartite Schmidt number \cite{HuberVicente} possible for PPT states?} In this and related questions, an extension of positive map mixers \cite{GHLS} to $k$-positive map mixers (or of the corresponding positive maps \cite{Clivaz} to $k$-positive maps) might allow for an SDP/map based characterization. It would also be interesting to explore this question via probabilistic techniques in high dimensions.
    \item Despite being non-distillable, PPT states can be useful in other tasks. For instance, using \emph{data hiding} (see~\cite{DVLT,Matthews-VV,Ceci-Guillaume,LL-VV} for more details) PPT states arbitrarily close to \emph{private states}~\cite{HHHO,random-private} can be constructed. Such states have a high key rate. In multipartite systems PPT states can still yield metrological advantages \cite{vertesitoth}. Could one show that all PPT states with high Schmidt number are useful in some operational sense?
    \end{itemize}
    

\emph{Acknowledgements}. We thank Matthias Christandl and Yu Yang for interesting discussions. Furthermore we gratefully acknowledge the hospitality of the Institut Henri Poincar\'e where part of this work has been conducted during the thematic trimester ``Analysis in Quantum Information Theory''. M.~Huber acknowledges funding from the Austrian Science Fund through the START project Y879-N27 and the joint Czech-Austrian project MultiQUEST I 3053-N27. L.~Lami acknowledges financial support from the European Research Council (ERC) under the Starting Grant GQCOP (Grant No. 637352). C.~Lancien is financially supported by the European Research Council through the grant agreement 648913, by the Spanish MINECO through the project MTM2014-54240-P, and by the Comunidad de Madrid through the project QUITEMAD+ S2013-ICE-2801. A.~M\"uller-Hermes acknowledges financial support from the European Research Council (ERC Grant Agreement no 337603), the Danish Council for Independent Research (Sapere Aude) and VILLUM FONDEN via the QMATH Centre of Excellence (Grant No. 10059). All authors contributed equally to this work.

\addcontentsline{toc}{section}{References}

\end{document}